
\documentclass[format=acmsmall]{acmart}

\pdfoutput=1
\acmJournal{TECS}
\acmYear{2017}
\acmMonth{10}
\copyrightyear{2017}

\setcopyright{acmcopyright}


\received{April 2017}
\received[revised]{June 2017}
\received[accepted]{June 2017}

\acmPrice{15.00}

\usepackage{epstopdf}
\usepackage{dblfloatfix}
\usepackage{booktabs}
\usepackage{enumitem}
\usepackage{url}
\usepackage[normalem]{ulem}
\usepackage{subcaption}
\usepackage{color}
\usepackage{adjustbox}
\usepackage{caption} 
\usepackage{wrapfig}
\usepackage{framed}
\newcommand{\T}{\tilde{T}}

\markright{right head}

\begin{document}




\title{Power-Temperature Stability and Safety Analysis for Multiprocessor
Systems}
\author{Ganapati Bhat}
\orcid{0000-0003-1085-2189}
\affiliation{%
	\institution{Arizona State University}
	\department{School of Electrical, Computer and Energy Engineering}
	\city{Tempe}
	\state{AZ}
	\postcode{85287}
	\country{USA}
}
\author{Suat Gumussoy}
\affiliation{%
	\institution{IEEE Member}
}
\author{Umit Y. Ogras}
\affiliation{%
	\institution{Arizona State University}
	\department{School of Electrical, Computer and Energy Engineering}
	\city{Tempe}
	\state{AZ}
	\postcode{85287}
	\country{USA}
}

\begin{abstract}
Modern multiprocessor system-on-chips (SoCs) integrate multiple heterogeneous 
cores to achieve high energy efficiency.
The power consumption of each core contributes to an increase in the temperature across the chip floorplan.
In turn, higher temperature increases the leakage power exponentially, 
and leads to a positive feedback with nonlinear dynamics. 
This paper presents a power-temperature stability and safety analysis technique for multiprocessor systems.
This analysis reveals the conditions under which the power-temperature trajectory converges to a stable fixed point. 
We also present a simple formula to compute the stable fixed point and maximum 
thermally-safe 
power consumption at \textit{runtime}. 
Hardware measurements on a state-of-the-art mobile processor show that our 
analytical formulation can predict the stable fixed point with an average error 
of 2.6\%.
Hence, our approach can be used at runtime to ensure thermally safe operation and guard against thermal threats.
\end{abstract}

%
%
\begin{CCSXML}
	<ccs2012>
	<concept>
	<concept_id>10010520.10010553.10010560</concept_id>
	<concept_desc>Computer systems organization~System on a chip</concept_desc>
	<concept_significance>500</concept_significance>
	</concept>
	<concept>
	<concept_id>10010583.10010662.10010586.10010679</concept_id>
	<concept_desc>Hardware~Temperature simulation and estimation</concept_desc>
	<concept_significance>500</concept_significance>
	</concept>
	<concept>
	<concept_id>10010583.10010662.10010674.10011723</concept_id>
	<concept_desc>Hardware~Platform power issues</concept_desc>
	<concept_significance>500</concept_significance>
	</concept>
	</ccs2012>
\end{CCSXML}

\ccsdesc[500]{Computer systems organization~System on a chip}
\ccsdesc[500]{Hardware~Temperature simulation and estimation}
\ccsdesc[500]{Hardware~Platform power issues}
%
%


\keywords{Power-temperature stability analysis, dynamic thermal and power 
management, multi-core architectures, mobile platforms.}

\thanks{This article was presented in the International Conference on 
Hardware/Software Codesign
	and System Synthesis (CODES+ISSS) 2017 and appears as part of the 
	ESWEEK-TECS special issue.
	
This work was supported partially by Semiconductor Research Corporation (SRC) 
task 2721.001 and National Science Foundation grant CNS-1526562.
	
	Author's addresses: G. Bhat {and} U. Y. Ogras, School of Electrical, 
	Computer and Energy Engineering, Arizona State University, Tempe, AZ, 
	85287; emails: \{gmbhat, umit\}@asu.edu; S. Gumussoy, Boston, MA, USA; 
	email: suat@gumussoy.net.}

\maketitle

\section{Introduction}

Power consumption and the resulting heat dissipation are among the major 
problems faced by mobile platforms.
Besides draining the battery,
high temperature deteriorates the reliability and user experience~\cite{huang2006hotspot,egilmez2015user}.
Recent evidence also shows that uncontrollable temperature increase
in one part of the chip poses serious safety risks~\cite{note7recall}.
To mitigate these risks, commercial chips typically have a hard-coded maximum safe temperature.
Thermal sensors 
monitor the temperature at critical hotspots.
If the observed temperature exceeds the maximum limit,
the system throttles the computational resources or shuts down the platform
depending on the severity of the violation.
However, these techniques are triggered only after the damage becomes 
observable.

There is a well-known positive feedback between the power consumption and the
temperature~\cite{vassighi2006thermal,liao2003coupled}.
Power consumption drives the chip temperature up through thermal resistance and capacitance networks.
Higher temperature, in turn, leads to an exponential increase in leakage power.
This nonlinear dynamics leads to a positive feedback which increases both the 
temperature and power consumption. 
\textit{When a stable fixed point exists}, it attracts all the temperature 
trajectories 
within the region of convergence to itself. 
Therefore, the steady increase in power consumption and temperature continues 
until the stable fixed point is reached. 
Otherwise, a thermal runaway occurs, as we prove in this paper.
%

Although the consequences of thermal runaway are detrimental, 
to the best of our knowledge, 
there are no models that can analyze the existence and stability of fixed points at runtime. 
More importantly, even if a stable fixed point exists,
it may be well above the maximum safe temperature limit. 
A static bound on the maximum power consumption can neither avoid thermal 
violations, 
nor provide insight on the expected time before a thermal violation occurs.
Therefore, it is critical to monitor the stability and safety of power-temperature dynamics at \textit{runtime},
and detect potential violations \textit{before any damage is done}.


\begin{figure}[t]
    \vspace{-1mm}	
    \centering
	\includegraphics[width=0.6\linewidth]{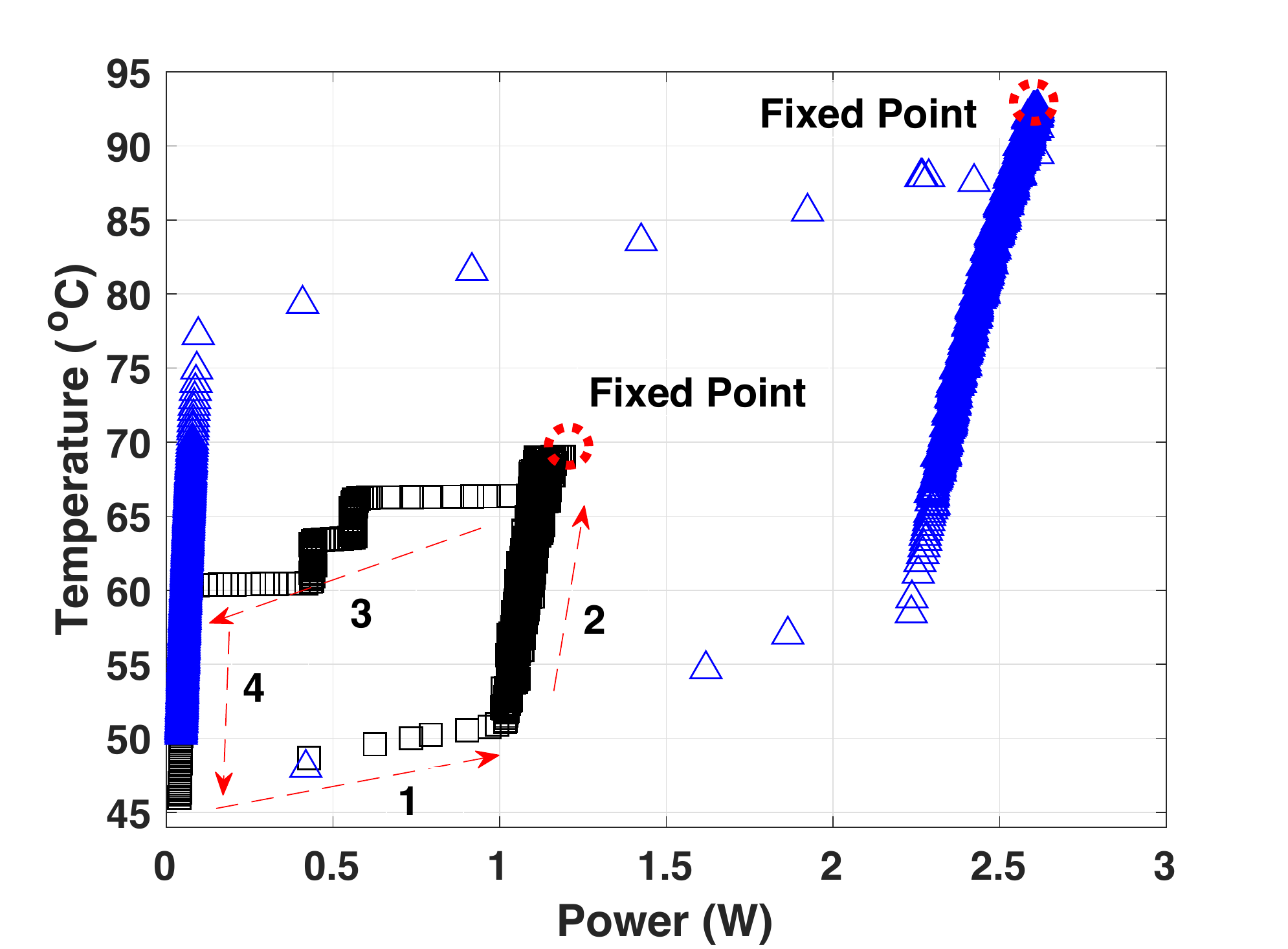}
	\vspace{-1mm}
	\caption{
			Illustration of the power consumption temperature trajectory for two power consumption levels using
	experiments performed on Odroid XU3 board~\cite{ODROID_Platforms}.} 
    \vspace{-2mm}
	\label{fig:motivation}
\end{figure}

To illustrate the problem addressed by this paper, 
we measured the power-temperature profile of a commercial SoC 
for a complete heat-up/cool-down cycle, as shown in Figure~\ref{fig:motivation}.
In particular, the inner trajectory, denoted by black $\square$ markers, starts 
with a power consumption slightly larger than 0.2~W.
When we increase the dynamic activity, the power consumption quickly rises to 0.9~W. 
As a result, the temperature starts ramping up during this period marked as ``1''.  
Then, we keep the dynamic activity constant, but  
the temperature continues to increase towards a fixed point (segment 2).
The corresponding rise in the power consumption reveals the impact of the leakage power,
since the dynamic activity is kept constant.
Eventually, the power consumption and temperature converge to (1.2~W, 
69$^\circ$C)\footnote{
After the dynamic activity is reduced, first, the power consumption drops. 
Then, the temperature starts decreasing (segments 3 and 4).}.
The existence of this fixed point (i.e., the upper right corner) is necessary to avoid a thermal runaway, 
but it is \textit{not sufficient} to ensure a thermally safe operation.
For example, when we repeat the same experiment with a higher dynamic activity,
we observe the other trajectory denoted by $\triangle$ markers.
The second trajectory also converges to a stable fixed point given by 
(2.6~W, 
93$^\circ$C).
However, this point is larger than the temperature which triggers throttling.
If a demanding application or a power virus drive the system to this fixed 
point, throttling can deteriorate the performance.
In contrast, thermally safe operation without performance penalties can be achieved, 
if we can compute the fixed point and the expected time to reach it as a 
function of the dynamic activity.

\vspace{1mm}
%
%

\noindent\textbf{The major contributions of this paper are as follows:}

\begin{enumerate} [leftmargin=*]

\item We first show that the power-temperature dynamics have 
either no fixed point or two fixed points, 
as a function of the system parameters and the dynamic power 
consumption~(Section~\ref{sec:fixed_point_condition}). 

\item We prove that the no-fixed-point case is unstable and 
causes thermal 
runaway~(Section~\ref{sec:stability}).

\item When there are two fixed points, we prove that one of these fixed points
is stable and we give the region of convergence, i.e., the temperature 
interval where any temperature inside it converges to the stable fixed point. 
We also prove that the second fixed point is unstable. We derive the region of 
convergence and the intervals for which the temperature 
diverges~(Section~\ref{sec:stability}).

\item We derive an analytical formula to compute the maximum dynamic power consumption allowed
to guarantee that the temperature does not exceed a thermally safe value (Section~\ref{sec:computation}). 

\item To validate the proposed approach, 
we present thorough experimental evaluations on an 8-core big.LITTLE platform ~\cite{ODROID_Platforms} 
using single-threaded, multi-threaded and concurrent applications. 
We demonstrate that the average and maximum prediction errors are 
2.6\% and 6.2\%, respectively. 
We also show that the total computational overhead of all
proposed computations is 75.2~$\mu$s of 100 ms control interval, 
i.e., $\approx$ 0.075\% (Section~\ref{sec:experiments}).

\end{enumerate}

\noindent\textbf{Potential Impact:}
This paper lays the theoretical foundation for power-temperature stability 
analysis and presents experimental validation of the contributions
summarized in the enumerated list above. 
As we demonstrate in Section~\ref{sec:experiments}, 
our power-temperature stability analysis has a very efficient and practical implementation 
despite the complexity of the derivations. 
Therefore, it can be employed by other researchers in dynamic thermal and 
power management~(DTPM) algorithms to determine if the power-temperature 
dynamics is stable or not. If any instability is detected, immediate corrective 
actions can be taken by the system.
Otherwise, the proposed approach can be used to predict the stable fixed point 
and the expected time to reach it. 
DTPM algorithms can use this prediction to determine the urgency and degree of the response. 
Finally, the proposed approach can accurately compute the maximum power consumption that can be
tolerated before violating the thermal constraints.
This insight can be used by DTPM algorithms to make informed 
decisions. For example, if a power-hungry application is 
driving the system 
beyond a safe temperature, the DTPM algorithm can selectively isolate the 
application or terminate it.
The proposed approach can also be applied as an effective built-in test 
to determine reliability and thermal violation risks.

The rest of this paper is organized as follows.
We present the related work in Section~\ref{sec:related_work}.
We give an overview of the proposed methodology and detail the theoretical 
derivations
in Section~\ref{sec:overview} and Section~\ref{sec:fixed_point_prediction}, respectively.
Finally, we present the experimental results in Section~\ref{sec:experiments},
and summarize our conclusions in Section~\ref{sec:conclusion}. 
\section{Related Work and Novelty} \label{sec:related_work}
Thermal modeling and analysis have recently attracted significant attention 
due to large power densities and the impact of temperature on 
reliability~\cite{vassighi2006thermal,li2004efficient}. 
These studies can be broadly classified as design time and runtime 
approaches. 
Design time approaches primarily focus on a full-chip thermal analysis such 
that parameters like thermal design power can be 
determined~\cite{huang2006hotspot,huang2009full,yang2007isac,zhan2005high}. For 
instance, 
Hotspot~\cite{huang2006hotspot} models the thermal 
behavior of the entire chip as a function of the floorplan, technological 
parameters and packaging.
Then, power consumption traces obtained using common benchmarks are used to 
simulate the thermal behavior. 
Similarly, authors in~\cite{yang2007isac} propose a tool which does the 
full-chip thermal analysis during the synthesis of a chip.
These models are very useful for early design stages,
however, the high fidelity of these approaches comes at the expense of 
computational complexity. 
It is possible to extract high-level models from these tools, 
and evaluate them iteratively to analyze the thermal behavior. 
However, iterative approaches are time-consuming and not accurate as we 
demonstrate in Section~\ref{sec:implementation_overhead}.

Virtually all commercial products have a mechanism to throttle performance, 
or shut down the whole system in case of thermal violations.
However, reactive approaches penalize performance, 
and respond only after the fact~\cite{sahin2016qscale, 
	isci2006analysis}.
This led to predictive approaches for dynamic thermal and power management.
Predictive approaches first develop computationally efficient thermal 
models which can be used at runtime~\cite{beneventi2014effective, 
hanumaiah2011performance}. 
These models are  used to predict the temperature as a function 
of the power consumption 
to guide the DTPM algorithms~\cite{prakash2016improving,cochran2013thermal,singla2015predictive,kumar2008system}.
For instance, the authors in~\cite{kumar2008system} propose a hybrid thermal 
management algorithm which uses hardware and software techniques for temperature control. 
In particular, they employ clock gating and thermal-aware scheduling 
to improve the performance of the algorithm. 
Similarly, the work presented in~\cite{xie2013dynamic} characterizes 
the thermal system parameters offline, considering the coupling between various 
components. 
Then, this model is used at runtime to predict the temperature, 
and control the frequency to minimize thermal violations.
The authors in~\cite{liu2013dynamic,brooks2007power} propose methods 
that consider the transient thermal effects and use them for thermal management~\cite{liu2013dynamic}.
While these models work for short prediction intervals, the error increases 
considerably when larger prediction windows are 
used~\cite{singla2015predictive}. 
Furthermore, they do not analyze the existence and stability of thermal fixed points. 
In contrast, our approach can accurately estimate the fixed point and maximum 
allowed power consumption 
at runtime with a low computational overhead. 
Therefore, it can be utilized by DTPM algorithms.

\vspace{1mm}
Recent studies have also proposed techniques to calculate a thermally sustainable power 
budget~\cite{pagani2017thermal,chen2016tsocket}, 
and maintain it at runtime~\cite{gupta2017dynamic}. 
In particular, the authors in~\cite{pagani2017thermal} propose a method to calculate a 
thermally safe power such that thermal constraints of the system are not violated. 
However, it does not consider the positive feedback between 
leakage power and temperature.
The work in~\cite{chen2016tsocket} proposes a framework called TSocket which 
evaluates the sustainable power budget for different threading strategies in a 
multiprocessor system. 
These studies do not address the problem of calculating the thermal fixed 
point and the conditions on existence of a fixed point.
They also employ mainly simulation tools such as HotSpot. 
In contrast, our approach of finding the maximum safe power is implemented and 
validated on a real hardware platform.

\vspace{1mm}
A number of studies analyze the positive feedback effect between power 
consumption and 
temperature~\cite{liao2003coupled,heo2003reducing,vassighi2006thermal}. 
In particular, the authors of~\cite{liao2003coupled} show that a thermal 
runaway is implied 
when the second order derivative of temperature with respect to time is 
positive. 
As pointed out by the authors, this criterion can be successfully applied 
during 
design time analysis and simulation. 
However, it cannot be used as a preventive measure at runtime, 
since it is satisfied only after the thermal runaway kicks off. 
Similarly, the technique presented in~\cite{vassighi2006thermal} uses a simple 
junction-to-ambient heat removal model 
to predict a thermal runaway during burn-in reliability screening before 
shipping the chip.
The authors in~\cite{heo2003reducing} use the temperature dependence of leakage 
to increase thermally sustainable power dissipation through activity migration.
However, neither one of these techniques addresses the problem of 
calculating the fixed point when there is no thermal runaway.
Our work addresses this problem by first deriving the conditions for the 
existence of a fixed point. When the fixed point exists, we provide the region 
of convergence for the power-temperature dynamics. Then, we predict the stable 
fixed point of the system.
Hence, it can be used to guard against power attacks that aim to induce damage 
by elevating the 
temperature~\cite{dadvar2005potential,hasan2005heat,kong2010thermal}.
\section{Preliminaries and Overview} \label{sec:overview}
This section first presents the power consumption and temperature models
required for the proposed analysis. 
Readers familiar with these models can jump to Section~\ref{sec:challenges},  
where we  summarize the challenges and give an overview of the proposed 
approach.

\subsection{Power and Temperature Models} \label{sec:models}
Suppose that there are $M$ processors in the target system, as summarized in 
Table~\ref{tab:summary_symbols}.
We can express the power consumption of processor $i$
as the sum of dynamic and leakage power consumption:
\begin{equation}\label{eqn:dynamic_power}
P_{i} = C_{sw,i} V_{i}^2 f_i + V_{i}I_{leak,i}
\end{equation}
where $C_{sw,i}$ is the switching capacitance, $V_i$ is the supply voltage and $f_i$ is the operating frequency.
The leakage current $I_{leak,i}$, which depends on the 
temperature $T$,
can be approximated as the sum of the gate leakage and subthreshold current as:
%
\begin{equation} \label{eqn:leakge_current}
I_{leak,i} = I_{g,i} + A_s\frac{W_i}{L} \frac{kT^2}{q} e^\frac{q(V_{GS,i} - 
V_{th,i})}{nkT}
\end{equation}
where $I_{g,i}$ is the gate leakage, $A_s$ is a technology constant,
$W_i/L$ is the ratio of the effective channel width to channel length,
$k$ is the Boltzmann constant, $q$ is the electron charge, $V_{GS,i}$ is the gate to source voltage, $V_{th,i}$ is the threshold voltage,
and $n$ is the sub-threshold swing coefficient~\cite{kim2003leakage}.
For notational convenience, we consolidate the technology and design parameters
by introducing the following constants:
\begin{equation} \label{eqn:constants}
\kappa_{1,i} = A_s\frac{W_i}{L}\frac{k}{q},~~~~
\kappa_{2,i} = \frac{q(V_{GS,i} - V_{th,i})}{nk}
\end{equation}
Note that $\kappa_{1,i} >0$, while $\kappa_{2,i}< 0$
since $V_{GS,i} < V_{th,i}$ for sub-threshold voltages.
In summary, the power consumptions of the processors in the target system
can be denoted by the $M \times 1$ vector $\mathbf{P} = [P_1, P_2, \ldots, P_M]^\intercal$,
where $P_i$ is obtained
using Equations~\ref{eqn:dynamic_power}-\ref{eqn:constants}
as:
%
\begin{equation} \label{eqn:total_power}
P_i = C_{sw,i} V_{i}^2 f_i + V_{i}I_{g,i} +  V_{i}\kappa_{1,i} T_i^2 e^\frac{\kappa_{2,i}}{T_i},~~1 \leq i \leq M
\end{equation}

Suppose that there are $N$ thermal hotspots of interest.
The dynamics of the temperature can be expressed using
the power consumption vector $\mathbf{P}$ and
thermal capacitance and conductance matrices~\cite{huang2006hotspot,sharifi2013prometheus}.
We employ the following discrete-time state-space system to 
model 
the temperature,
since the measurements and control decisions are made in periodic intervals:
\begin{equation}\label{eqn:temp_model}
\mathbf{T[k+1]} = A \mathbf{T[k]}+ B \mathbf{P[k]}
\end{equation}
This equation expresses the temperature of $N$ cores 
as a function of the power consumption of $M$ sources,  
e.g., the big/little CPU clusters, GPU and memory. 
The ${N \times N}$ matrix $A$ describes the effect of temperature in time step $k$ 
on the temperature in the next time step.
The ${N \times M}$ matrix $B$, on the other hand, describes the effect of each 
power source on the temperature in the next time step.
When we plug the $M \times 1$ power consumption vector $\mathbf{P}$
from Equation~\ref{eqn:total_power} into Equation~\ref{eqn:temp_model},
we obtain the following system of nonlinear equations:

\noindent
\begin{eqnarray} \label{eqn:nonlinear_model}
\mathbf{T[k+1]} = A\mathbf{T[k]} +
B \big[P_1[k], P_2[k], \ldots, P_M[k] \hspace{0mm}
	\big]^\intercal,~\mathrm{where} \\ \vspace{-2mm}
P_i[k] = C_{sw,i} V_{i}^2 f_i + I_{g,i} V_{i} +  V_i\kappa_{1,i} T_i[k]^2
	e^\frac{\kappa_{2,i}}{T_i[k]},~~~~~~
	\hspace{-0.25mm}~1 \hspace{-0.25mm} \leq \hspace{-0.25mm}i
	\hspace{-0.25mm}\leq \hspace{-0.25mm} M \nonumber
\end{eqnarray}
\begin{table}[t]
	\centering
	\small
	\caption{Summary of major parameters}
	\vspace{-4mm}
	\label{tab:summary_symbols}
	\begin{tabular}{@{}ll@{}}
		\toprule
		Symbol                      & 
		Description                                                             
		
		\\
		\midrule
		$N,M$                         & \begin{tabular}[c]{@{}l@{}} The number 
			of 
			thermal hotspots and processing \\
			elements (resources) in the SoC, respectively. 
		\end{tabular}                                                       
		\\
		\midrule
		$\mathbf{T[k]}$                      & \begin{tabular}[c]{@{}l@{}}$N 
			\times 1$ 
			array where $T_i[k],~1 \leq j \leq N$\\ denotes the temperature of 
			the 
			$i^{th}$ hotspot.\end{tabular}        \\ \midrule
		$T$                         & \begin{tabular}[c]{@{}l@{}} 
			The maximum (scalar) steady state temperature \\
			over all thermal hotspots.\end{tabular}                       \\ 
		\midrule
		$T^*, P_C^*$                         & \begin{tabular}[c]{@{}l@{}} The 
			maximum 
			thermally safe temperature and \\ power, 
			respectively.\end{tabular}                       \\ \midrule
		$\mathbf{P[k]}$                      & \begin{tabular}[c]{@{}l@{}}$M 
		\times 1$ 
			array where $P_i[k],~1 \leq i \leq M$ denotes\\ the power 
			consumption 
			of the $i^{th}$ resource.\end{tabular} \\ \midrule
		$\kappa_{1,i}, \kappa_{2,i}$ & \begin{tabular}[c]{@{}l@{}}Technology 
			dependent parameters of the\\ leakage power for the $i^{th}$ 
			resource.\end{tabular}                            \\ \midrule
		$a, b$                        & \begin{tabular}[c]{@{}l@{}}Parameters 
		of 
			the single input single output\\ model which describes the thermal 
			dynamics.\end{tabular}  \\ \midrule
		$\T, \alpha, \beta$   & Auxiliary parameters introduced in 
		Equation~\ref{change_of_parameters}. 
		\\ 	\bottomrule
	\end{tabular}
\end{table}

\subsection{Challenges and Needs Assessment}  \label{sec:challenges}

The nonlinear system given in Equation~\ref{eqn:nonlinear_model}
shows the positive feedback between 
the power consumption of $M$ processors and $N$ temperature hotspots.
Solving this problem at runtime presents \emph{three major challenges}.
First, a stable solution may not even exist due to the nonlinear positive feedback.
Second, the power consumption and temperature at time $k$
depend on their values at time $k-1$
due to the temperature dependence of the leakage power.
This dependency requires an iterative solution,
which is intractable for runtime analysis, as shown in 
Section~\ref{sec:implementation_overhead}.
Finally, we need to find the maximum power consumption $P_i^*$
that \emph{guarantees} a thermally safe operation.
A simple iteration can find temperature using the power inputs, 
but the opposite direction requires a rigorous approach.

The proposed approach addresses each of these challenges one 
by one as outlined in Figure~\ref{fig:overview_figure}.
We first determine whether a stable fixed point exists 
using the current power and temperature measurements. 
Then, we describe how the stable fixed points can be computed efficiently.
Finally, we derive compact analytical formulae to compute 
the time to reach the fixed point and the constraints on the maximum power consumption $P_i^*$
to avoid thermal violations.
The proposed approach is called with the default frequency governors, 
which typically have a period of 100 ms.

\begin{figure}[h!]
	\centering
	\includegraphics[width=0.7\linewidth]{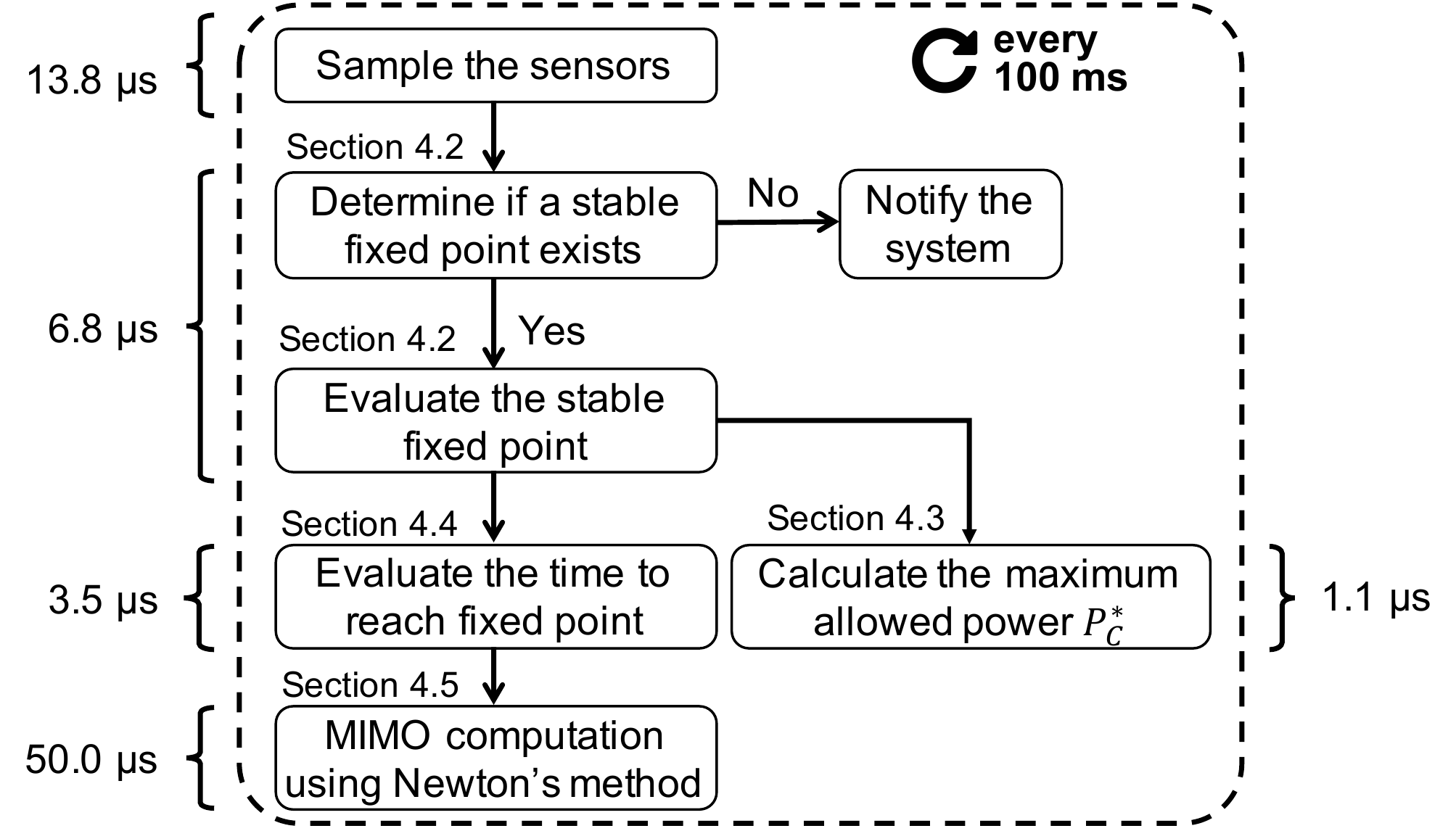}
	\vspace{-3mm}
	\caption{Overview of the proposed approach.}
	\label{fig:overview_figure}
	\vspace{-3mm}
\end{figure}

\section{Thermal Fixed Point Analysis} \label{sec:fixed_point_prediction}
Let the scalar $T$ denote the maximum steady state temperature 
over all thermal hotspots, i.e.:
\begin{equation*}
T = \max_{1 \leq i \leq N} \lim_{k \rightarrow \infty} T_i[k] 
\end{equation*}
Since the thermal safety is determined by the maximum 
temperature, 
we focus on the hotspot with the highest temperature.
At steady state ($k\rightarrow \infty$), we can model the temperature of \emph{each hotspot}
using the following single input single output (SISO) system: 
\begin{equation}\label{eqn:temp_ss}
T = aT + bP
\end{equation}

\noindent where $0<a<1$ and $b>0$ are the parameters of the reduced order system. 
We emphasize that the reduced order model is obtained through system identification,
as described in Section~\ref{sec:sys_id}.
Using a SISO model does not mean that we consider only one core. 
Unlike a crude approximation that directly uses the corresponding entries 
in $A$ and $B$ matrices, 
the coefficient $a$ in our model reflects the thermal coupling between 
different hotspots, 
and coefficient $b$ reflects the impact of multiple power sources. 
We employ a SISO model, since it enables an in-depth theoretical analysis 
with powerful insights for practical scenarios.
We discuss the solution to the multi-input multi-output (MIMO) 
case at the end 
of this section.

To isolate the impact of the temperature,
we rewrite the total power consumption given in Equation~\ref{eqn:total_power} 
as:
\begin{equation}\label{eqn:power_simplifed}
P = P_C + V \kappa_{1}T^2e^{\frac{\kappa_{2}}{T}}
\end{equation}
where $P_C = C_{sw} V^2 f + V I_{g}$ represents the 
\emph{temperature-independent} component,
and subscript $i$ is dropped to simplify the notation. 
Substituting Equation~\ref{eqn:power_simplifed} into Equation~\ref{eqn:temp_ss} gives:
\begin{equation} \label{fixed_point_ori}
(1-a)T - bP_C = bV \kappa_{1}T^2e^{\frac{\kappa_{2}}{T}}
\end{equation}
If this equation has feasible solution(s), 
we can say that fixed points exist.
Since Equation~\ref{fixed_point_ori} is the \textit{focal point} of the subsequent analysis, 
we introduce the following change of variables 
to leave the exponential term alone and facilitate the subsequent analysis:
\begin{equation} \label{change_of_parameters}
\hspace{-2mm} \T = -\frac{\kappa_2}{T}, \hspace{3mm} 
\alpha = \frac{b}{a-1}\frac{P_C}{\kappa_2} > 0, \hspace{3mm}
\beta =  \frac{a-1}{b} \frac{1}{V \kappa_{1} \kappa_{2}} > 0 \hspace{1mm}
\end{equation}
With this change of variables, we rewrite Equation~\ref{fixed_point_ori} as:
\begin{equation} \label{eqn:fixed_point}
\beta \T(1-\alpha\T)=e^{-\T}
\end{equation}
where $\alpha >0, \beta >0$.
\textit{\uline{We will first derive the conditions on $\alpha$ and $\beta$ such 
that
Equation~\ref{eqn:fixed_point} has a solution.}}
Then, we will go back from the transformed domain to the original parameters. 
Finally, we will show how to compute the constraint on the maximum power consumption $P_C^*$ 
required to avoid thermal violations, given a maximum temperature 
constraint $T^*$.

\subsection{Necessary and Sufficient Conditions for the Existence 
of Fixed Point(s)} \label{sec:fixed_point_condition}

The domain of the auxiliary temperature is given by $\T \in (0,\infty)$, 
since $\T = -\kappa_2 /T$ where $\kappa_2 < 0$. 
Hence, the right-hand side of Equation~\ref{eqn:fixed_point} 
lies in the interval $(0,1)$. 
That is, $0 < \beta \T(1-\alpha\T) = e^{-\T} < 1$. 
Since this condition requires that $\T<1/\alpha$, 
we can take the logarithm of both sides while the equality holds, i.e.,
\[
\ln \beta + \ln \T + \ln(1-\alpha \T) = -\T
\]
Equation~\ref{eqn:fixed_point} has the same fixed points as the following equation: 
\begin{equation} \label{eqn:fixed_point_ln}
\mathcal{F}(\T) \triangleq \ln \beta + \ln \T + \ln(1-\alpha \T) + \T = 0
\end{equation}
%
The important properties of $\mathcal{F}(\T)$
employed in our analysis 
are summarized in the following lemma.
\begin{lemma} \label{lemma_F_T}
$\mathcal{F}(\T)$ given in Equation~\ref{eqn:fixed_point_ln} satisfies the following properties:
\begin{enumerate} [leftmargin=*]
  \item $\mathcal{F}(\T)$ is a concave function in the interval $\T \in 
  (0,1/\alpha)$.
  \item $\mathcal{F}(\T)$ has a unique maxima at $\T_m$, which is given by:
\begin{equation} \label{eqn:inflection_point}
\T_m = \frac{1}{2\alpha}-1+\sqrt{\frac{1}{4\alpha^2}+1}
\end{equation}
  \item $\mathcal{F}(\T)$ is an increasing function in the interval $(0,\T_m)$ and 
    a decreasing function in $(\T_m,1/\alpha)$.
\end{enumerate}
\end{lemma}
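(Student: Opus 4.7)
The plan is to prove all three properties through direct computation of derivatives on the open domain $\T \in (0, 1/\alpha)$. First, I would differentiate $\mathcal{F}$ term-by-term to obtain
\begin{equation*}
\mathcal{F}'(\T) = \frac{1}{\T} - \frac{\alpha}{1-\alpha\T} + 1, \qquad \mathcal{F}''(\T) = -\frac{1}{\T^2} - \frac{\alpha^2}{(1-\alpha\T)^2}.
\end{equation*}
Both summands of $\mathcal{F}''$ are strictly negative on the open domain, so $\mathcal{F}''(\T) < 0$ throughout, which immediately establishes property (1): $\mathcal{F}$ is strictly concave.

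Next, for property (2), I would locate the unique stationary point by solving $\mathcal{F}'(\T) = 0$. Multiplying through by the strictly positive quantity $\T(1-\alpha\T)$ reduces the equation to the quadratic
\begin{equation*}
\alpha\T^2 + (2\alpha - 1)\T - 1 = 0,
\end{equation*}
whose discriminant simplifies to $1 + 4\alpha^2$. The quadratic formula then produces two real roots, $\frac{1}{2\alpha} - 1 \pm \sqrt{\frac{1}{4\alpha^2} + 1}$. The minus-sign root is strictly less than $-1$ because $\sqrt{1/(4\alpha^2) + 1} > 1/(2\alpha)$, so it lies outside the domain. The plus-sign root coincides with $\T_m$ in Equation~\ref{eqn:inflection_point}; it is positive since $\sqrt{1/(4\alpha^2) + 1} > 1$ forces $\T_m > 1/(2\alpha) > 0$, and it satisfies $\T_m < 1/\alpha$ via the equivalent inequality $\sqrt{1/(4\alpha^2) + 1} < 1 + 1/(2\alpha)$, which after squaring reduces to $0 < 1/\alpha$.

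Finally, for property (3), I would combine strict concavity from (1) with uniqueness of the critical point from (2). Because $\mathcal{F}'$ is strictly decreasing on $(0, 1/\alpha)$ and vanishes only at $\T_m$, it must be positive on $(0, \T_m)$ and negative on $(\T_m, 1/\alpha)$, which yields the claimed monotonicity intervals and also confirms $\T_m$ as the global maximum (reinforced by the fact that the two logarithmic terms drive $\mathcal{F} \to -\infty$ at both endpoints of the domain). The main obstacle is really just the bookkeeping around the quadratic, namely verifying that exactly one of the two roots lies in the open interval $(0, 1/\alpha)$ and that it matches the closed-form expression stated in the lemma; everything else reduces to sign checks on $\mathcal{F}''$ and a direct appeal to the concavity-plus-unique-critical-point structure.
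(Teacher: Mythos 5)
Your proposal is correct and follows essentially the same route as the paper's proof in Appendix~\ref{proof_F_T}: compute $\mathcal{F}'$ and $\mathcal{F}''$, conclude concavity from $\mathcal{F}''<0$, solve the resulting quadratic for the stationary point, and read off monotonicity from concavity plus uniqueness of the critical point. Your version is in fact slightly more careful than the paper's, since you explicitly verify that the minus-sign root lies outside the domain and that $\T_m$ lies strictly inside $(0,1/\alpha)$, where the paper only appeals to positivity of the temperature.
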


\begin{proof}
	The proof is provided in Appendix~\ref{proof_F_T}, 
	while an informal explanation is provided below to maintain a smooth flow.
\end{proof}

Figure~\ref{fixed_point_illustration} sketches $\mathcal{F}(\T)$ for $\T \in (0, 1/ \alpha)$.
As the first two properties of Lemma~\ref{lemma_F_T} state, 
$\mathcal{F}(\T)$ is concave and has a unique maxima at $\T=\T_m$. 
Furthermore, it is increasing in the interval $(0,\T_m)$ and decreasing in 
$(\T_m,1/\alpha)$, 
as shown in Figure~\ref{fixed_point_illustration}.
Hence, Equation~\ref{eqn:fixed_point_ln} ($\mathcal{F}(\T) = 0$)  
has \emph{two solutions} if and only if the maxima is non-negative, i.e., 
$\mathcal{F}(\T_m) \geq 0$. 
The fixed points coincide when $\mathcal{F}(\T_m) = 0$, 
and there are no fixed points for $\mathcal{F}(\T_m) < 0$. 
To be practical, these conditions should be expressed 
in terms of the design parameters $\alpha$ and $\beta$.  
This is achieved by Theorem~\ref{thm_fixed_point}, which summarizes our first major result.

\begin{figure}[b]
	\vspace{-3mm}
	\centering
	\includegraphics[width=0.65\linewidth]{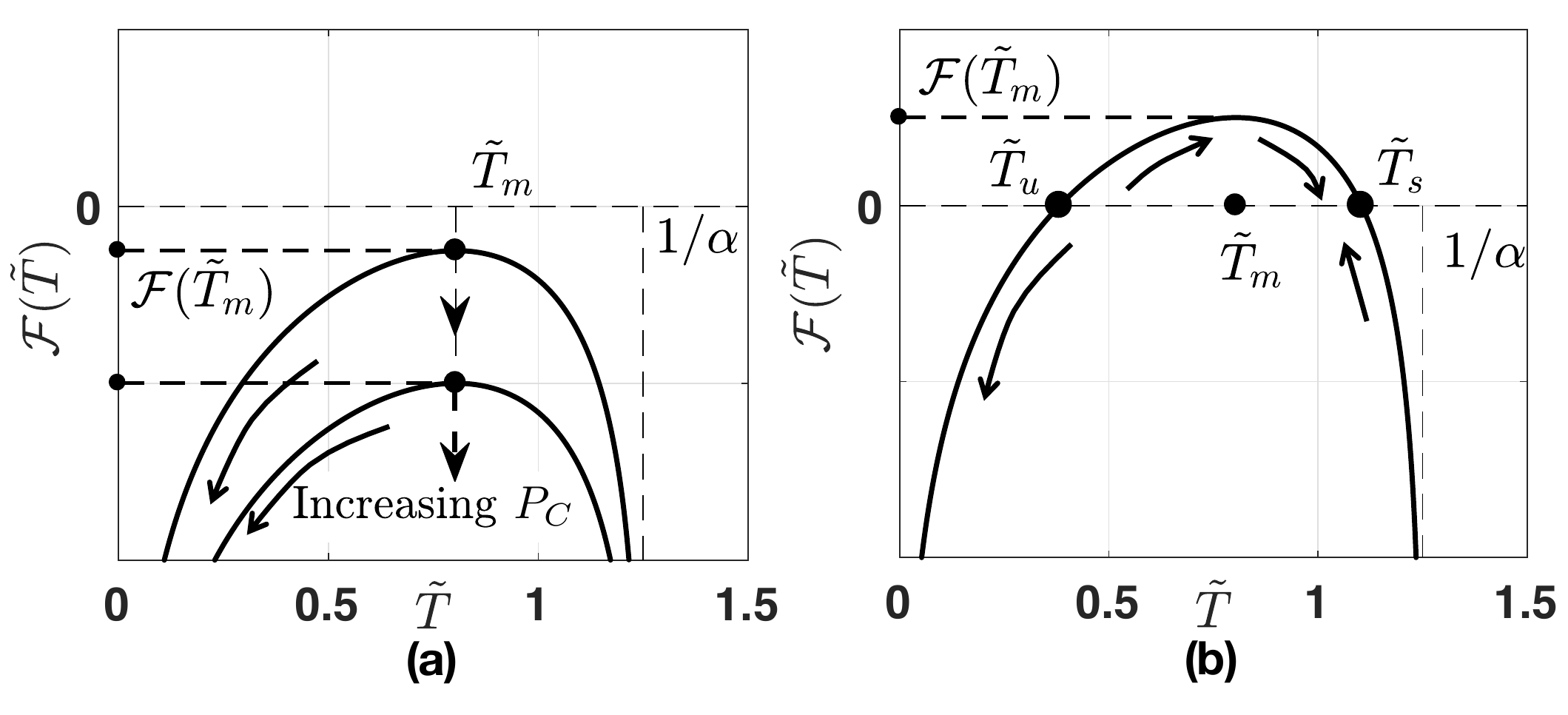}
	\vspace{-5mm}
	\caption{Illustration of $\mathcal{F}(\T)$ when there is no fixed point and 
	when there are two fixed points, respectively.}
	\label{fixed_point_illustration}
	\vspace{-3mm}
\end{figure}

\begin{theorem} \label{thm_fixed_point} 
The maxima of $\mathcal{F}(\T)$ is given by:
\vspace{-1mm}
\begin{equation} \label{eqn:F_Tm}
\mathcal{F}(\T_m) = \ln \beta - \ln\left(\frac{2}{\T_m}+1 \right)e^{-\T_m}
\end{equation}
Hence, Equation~\ref{eqn:fixed_point} has two fixed points if and only if $\beta\geq\left(\frac{2}{\T_m}+1\right)e^{-\T_m}$ where $\T_m$
depends only the parameter $\alpha$ and it is defined in 
Equation~\ref{eqn:inflection_point}.
Otherwise, it has no solution.
\end{theorem}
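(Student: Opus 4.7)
The plan is to prove the theorem in two stages: first, derive the closed-form expression for $\mathcal{F}(\T_m)$ by exploiting the first-order optimality condition, and second, translate the sign of $\mathcal{F}(\T_m)$ into the stated inequality on $\beta$ by invoking Lemma~\ref{lemma_F_T}. The key observation that drives everything is that at $\T_m$ the logarithmic expression $\ln(1-\alpha\T_m)$ collapses to a very clean function of $\T_m$ alone, thanks to the quadratic relation that defines $\T_m$.

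For the first stage, I would compute the derivative
\[
\mathcal{F}'(\T) = \frac{1}{\T} - \frac{\alpha}{1-\alpha\T} + 1
\]
from Equation~\ref{eqn:fixed_point_ln}. Setting $\mathcal{F}'(\T_m) = 0$ and clearing denominators yields the quadratic $\alpha\T_m^2 + (2\alpha-1)\T_m - 1 = 0$, whose positive root is exactly the expression in Equation~\ref{eqn:inflection_point}, reconfirming Lemma~\ref{lemma_F_T}(2). The algebraic trick is then to rearrange this quadratic as $\alpha\T_m(\T_m + 2) = 1 + \T_m$, so that $\alpha\T_m = (1+\T_m)/(\T_m + 2)$, and therefore
\[
1 - \alpha\T_m = \frac{1}{\T_m + 2}.
\]
Substituting this back into $\mathcal{F}(\T_m) = \ln\beta + \ln\T_m + \ln(1-\alpha\T_m) + \T_m$ merges two of the log terms into $-\ln(1 + 2/\T_m)$ and, absorbing the additive $\T_m$ as $-\ln e^{-\T_m}$, produces the stated formula for $\mathcal{F}(\T_m)$.

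For the second stage, Lemma~\ref{lemma_F_T} tells us that $\mathcal{F}$ is strictly concave on $(0,1/\alpha)$ with a unique maximum at $\T_m$. Direct inspection of Equation~\ref{eqn:fixed_point_ln} shows $\mathcal{F}(\T) \to -\infty$ as $\T \to 0^+$ (via $\ln\T$) and as $\T \to (1/\alpha)^-$ (via $\ln(1-\alpha\T)$). By the intermediate value theorem combined with strict monotonicity on each side of $\T_m$, the equation $\mathcal{F}(\T) = 0$ has exactly two roots (counted with the coincidence at $\T_m$) when $\mathcal{F}(\T_m) \geq 0$ and no roots when $\mathcal{F}(\T_m) < 0$. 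Rewriting $\mathcal{F}(\T_m) \geq 0$ using the closed form from the first stage is equivalent to $\ln\beta \geq \ln\!\bigl[(2/\T_m + 1)e^{-\T_m}\bigr]$, i.e., $\beta \geq (2/\T_m + 1)e^{-\T_m}$, completing the theorem.

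The main obstacle is the algebraic manipulation in the first stage: one must notice that the optimality quadratic rearranges into the factored form $\alpha\T_m(\T_m+2) = 1+\T_m$, which is what yields the strikingly compact identity $1 - \alpha\T_m = 1/(\T_m+2)$. Without that reformulation, $\mathcal{F}(\T_m)$ would remain an unwieldy expression involving the explicit square-root form of $\T_m$; with it, both the closed-form maximum and the clean existence threshold drop out immediately, and no further analytical work is required beyond the boundary and concavity arguments supplied by Lemma~\ref{lemma_F_T}.
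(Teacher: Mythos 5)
Your proposal is correct and follows essentially the same route as the paper: both exploit the first-order condition at $\T_m$ (equivalently, the relation $\alpha = (\T_m+1)/(\T_m^2+2\T_m)$, which is your identity $1-\alpha\T_m = 1/(\T_m+2)$) to collapse $\ln\T_m + \ln(1-\alpha\T_m)$ into $-\ln(2/\T_m+1)$, and then use concavity plus the $-\infty$ boundary behavior to turn the sign of $\mathcal{F}(\T_m)$ into the threshold on $\beta$. No gaps.
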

%
\begin{proof}
The proof is provided in Appendix~\ref{existence_proof}.
\end{proof}

At runtime, we first compute $\T_m$ using Equation~\ref{eqn:inflection_point}. 
Then, we check the condition given in Theorem~\ref{thm_fixed_point}. 
If it is not satisfied, we conclude that there will be a thermal runaway. 
This knowledge can be used to throttle the cores aggressively or enter an 
emergency state. 
Otherwise, we proceed to compute the maximum allowed power consumption 
that will avoid thermal violations.

\subsection{Stability of the Fixed Points}  \label{sec:stability}
The stability of the fixed points is determined 
by the behavior of $\T$ as function of $\mathcal{F}(\T)$. 
To provide a smooth flow, we summarize this behavior using the following lemma.
\begin{lemma} \label{lem_signFT}
The value of $~\T$ in the temperature iteration increases when 
$\mathcal{F}(\T)<0$, 
and decreases when $\mathcal{F}(\T)>0$.
\end{lemma}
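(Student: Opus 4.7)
The plan is to read off the direction of the one-step iteration directly from the sign of $\mathcal{F}(\T)$, by pushing the raw update through the same change of variables that produced Equation~\ref{eqn:fixed_point}. Concretely, I would start from the scalar SISO recursion $T[k+1]=aT[k]+bP[k]$ that underlies the steady-state relation in Equation~\ref{eqn:temp_ss}, substitute the temperature-dependent power expression of Equation~\ref{eqn:power_simplifed}, and rearrange to obtain
\[
T[k+1]-T[k] \;=\; -\bigl[(1-a)T[k]-bP_C\bigr] \;+\; bV\kappa_{1}\,T[k]^{2}\,e^{\kappa_{2}/T[k]}.
\]
The right-hand side is precisely the residual of the fixed-point equation~(\ref{fixed_point_ori}) evaluated at $T[k]$; in particular, it vanishes exactly at a fixed point, and its sign at any other iterate decides whether the next $T$ lies above or below the current one.

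Next, I would transport this inequality into the auxiliary domain via the substitution $\T=-\kappa_{2}/T$. After multiplying through by $\T^{2}$ and absorbing the positive factor $-(1-a)\kappa_{2}$ into the definitions of $\alpha$ and $\beta$ from Equation~\ref{change_of_parameters}, the strict inequality governing $T[k+1]-T[k]$ becomes a comparison between $\beta\,\T(1-\alpha\T)$ and $e^{-\T}$ on the admissible domain $\T\in(0,1/\alpha)$. Both sides are strictly positive on this interval, so taking natural logarithms preserves the direction of the inequality and produces exactly the polynomial-plus-logarithm combination $\ln\beta+\ln\T+\ln(1-\alpha\T)+\T$, i.e., $\mathcal{F}(\T)$ as defined in Equation~\ref{eqn:fixed_point_ln}. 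A parallel derivation for the reverse inequality completes the equivalence between the sign of $T[k+1]-T[k]$ and the sign of $\mathcal{F}(\T)$.

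The final step is to transfer this direction-of-motion statement back to the auxiliary variable. Since $\T=-\kappa_{2}/T$ with $\kappa_{2}<0$, the map $T\mapsto\T$ is strictly monotone, so the sequence $\{\T[k]\}$ moves in a direction rigidly linked to that of $\{T[k]\}$. Combining this monotonic correspondence with the equivalence obtained above yields the lemma's statement relating the direction of the $\T$-iteration to the sign of $\mathcal{F}(\T)$.

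The main obstacle is disciplined sign bookkeeping. The quantities $a-1$, $\kappa_{2}$, $(a-1)\kappa_{2}$, $bV\kappa_{1}\kappa_{2}^{2}$, and $d\T/dT$ each carry a definite but not uniformly positive sign, and every multiplication or division by one of them either preserves or reverses the inequality. I would keep an explicit ledger of these signs and verify that each algebraic manipulation acts on a strictly positive quantity before any non-monotone step such as the logarithm is invoked. Once the signs are pinned down, the remainder of the argument is a mechanical substitution using the definitions of $\alpha$, $\beta$, and $\mathcal{F}$, with no additional analytic machinery required.
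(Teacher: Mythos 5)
Your proposal is correct and follows essentially the same route as the paper's appendix proof: push the one-step recursion $T[k+1]=aT[k]+bP[k]$ through the substitution $\T=-\kappa_2/T$, identify the sign of the increment with the sign of $\beta\T(1-\alpha\T)-e^{-\T}$, and pass to $\mathcal{F}(\T)$ by taking logarithms on $(0,1/\alpha)$ where both sides are positive; the paper merely organizes the same algebra around $1/\T[k+1]-1/\T[k]$ instead of $T[k+1]-T[k]$. One caution on the sign ledger you rightly flag as the crux: since $T[k+1]-T[k]=-R(T[k])$ with $\operatorname{sign}R=\operatorname{sign}\mathcal{F}$, and $T\mapsto\T$ is strictly \emph{decreasing}, your derivation carried to completion gives that $\T$ \emph{increases} when $\mathcal{F}(\T)>0$ and \emph{decreases} when $\mathcal{F}(\T)<0$ --- this agrees with the appendix proof and with Theorem~\ref{thm_stability} (no fixed point $\Rightarrow\T\to0$), but is the reverse of the lemma's printed wording, which has the two directions swapped; state your conclusion in the former form.
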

\begin{proof}
The proof is provided in Appendix~\ref{proof_lem_sign_FT}.
\end{proof}
 
This lemma allows us to determine the stability characteristics of fixed points by inspecting the sign of the function $\mathcal{F}(\T)$. The following theorem summarizes the stability results using this lemma, which is also illustrated with the arrows in Figure~\ref{fixed_point_illustration}(a).

\begin{theorem} \label{thm_stability}
The stability of the fixed points is as follows.
\begin{enumerate} [leftmargin=*]
  \item When Equation~\ref{eqn:fixed_point_ln} has no solution,
the temperature iteration diverges, i.e., $\T \rightarrow 0$ $(T \rightarrow 
\infty)$, 
as illustrated in Figure~\ref{fixed_point_illustration}.
Hence, there is a thermal runaway.

  \item When there are two fixed points, $\T_u\in(0,\T_m)$ is unstable and $\T_s\in(\T_m,\frac{1}{\alpha})$ is stable.
      
  \item In the latter case, any temperature iteration starting $(0,\T_u)$  
  diverges, i.e., $\T \rightarrow 0$ and $T \rightarrow \infty$ leading to a 
  thermal runaway.
However, any temperature iteration starting in $(\T_u,\frac{1}{\alpha})$ 
converges to the stable  fixed point $\T_s$.
\end{enumerate}
\end{theorem}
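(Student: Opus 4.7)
The plan is to classify stability by combining the shape of $\mathcal{F}(\T)$ established in Lemma~\ref{lemma_F_T} with the monotone-update rule supplied by Lemma~\ref{lem_signFT}. For Part~1, the no-fixed-point case gives $\mathcal{F}(\T_m) < 0$ by Theorem~\ref{thm_fixed_point}, and concavity immediately yields $\mathcal{F}(\T) \le \mathcal{F}(\T_m) < 0$ on the entire admissible interval $(0, 1/\alpha)$. Lemma~\ref{lem_signFT} then dictates that the sign of $\mathcal{F}$ never flips, so the iteration moves $\T$ monotonically in a single direction at every step; since no interior fixed point can stop it, $\T$ is driven to the boundary corresponding to $T\to\infty$, establishing thermal runaway.

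For Part~2, because $\mathcal{F}$ is concave, increasing on $(0, \T_m)$ and decreasing on $(\T_m, 1/\alpha)$, and attains a positive value at $\T_m$, it has exactly two zeros $\T_u \in (0, \T_m)$ and $\T_s \in (\T_m, 1/\alpha)$. The resulting sign pattern is $\mathcal{F} < 0$ on $(0, \T_u)$, $\mathcal{F} > 0$ on $(\T_u, \T_s)$, and $\mathcal{F} < 0$ on $(\T_s, 1/\alpha)$. Applying Lemma~\ref{lem_signFT} locally shows that the iteration moves $\T$ in opposite directions on the two sides of each fixed point: around $\T_u$ it is pushed away on both sides, so $\T_u$ is unstable; around $\T_s$ it is drawn toward $\T_s$ from both sides, so $\T_s$ is stable.

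To upgrade these local statements to the global basin claims of Part~3, I would exploit monotonicity of the one-step map $g(T) = aT + bP(T)$. A short calculation (using $\kappa_{2}<0$ when differentiating the leakage term) shows that $g$ is strictly increasing in $T$, and hence the induced iteration on $\T$ is order-preserving. Because an order-preserving map cannot carry an iterate across a fixed point in a single step, every trajectory is trapped in exactly one of $(0, \T_u)$, $(\T_u, \T_s)$, or $(\T_s, 1/\alpha)$. Within each trapping interval the iteration is strictly monotone by the sign of $\mathcal{F}$, and a bounded monotone sequence must converge to an endpoint, which has to be a fixed point of the map or the boundary $\T=0$. Matching limits against the sign pattern then produces the three basins: $(0, \T_u)$ escapes toward $\T=0$ (i.e., $T \to \infty$), while both $(\T_u, \T_s)$ and $(\T_s, 1/\alpha)$ converge to $\T_s$, together forming the region of convergence $(\T_u, 1/\alpha)$. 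The main obstacle I expect is precisely this trapping step: without invoking order-preservation, a single discrete-time overshoot could in principle push an iterate across $\T_u$ or $\T_s$, so making the monotonicity of $g$ and the forward-invariance of $(0, 1/\alpha)$ rigorous is where the care is needed before the monotone-bounded convergence argument can close the proof.
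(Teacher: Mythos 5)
Your proposal is correct, and Parts~1 and~2 follow essentially the same route as the paper: combine the concavity/sign pattern of $\mathcal{F}(\T)$ from Lemma~\ref{lemma_F_T} and Theorem~\ref{thm_fixed_point} with the direction-of-iteration rule of Lemma~\ref{lem_signFT} to read off instability of $\T_u$ and stability of $\T_s$. Where you genuinely depart from the paper is Part~3. The paper's proof stops at ``$\T$ increases on $(\T_u,\T_s)$ and decreases on $(\T_s,1/\alpha)$, hence converges to $\T_s$,'' which tacitly assumes a discrete-time iterate cannot jump over a fixed point. You identify this overshoot issue explicitly and close it by showing the one-step map $g(T)=aT+b\bigl(P_C+V\kappa_1 T^2 e^{\kappa_2/T}\bigr)$ is strictly increasing; indeed $g'(T)=a+bV\kappa_1 e^{\kappa_2/T}(2T-\kappa_2)>0$ because $\kappa_2<0$, and since $\T=-\kappa_2/T$ is a decreasing reparametrization, the induced map on $\T$ is also order-preserving, so each of $(0,\T_u)$, $(\T_u,\T_s)$, $(\T_s,1/\alpha)$ is a trapping interval and monotone-bounded convergence finishes the argument. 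This buys a fully rigorous basin-of-attraction statement at the cost of one extra derivative computation; the paper's version is shorter but leaves that step implicit. One residual loose end shared with the paper: in the no-fixed-point case and on $(0,\T_u)$ you argue via $\mathcal{F}$, which is only defined for $\T<1/\alpha$; for completeness one should note that for $\T\geq 1/\alpha$ the quantity $\beta\T(1-\alpha\T)-e^{-\T}$ is negative outright, so Lemma~\ref{lem_signFT}'s conclusion (that $\T$ decreases) still applies there without passing through the logarithm.
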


\begin{proof}
The proof is provided in Appendix~\ref{stability_proof}, 
while an informal explanation is provided below.
\end{proof}

Theorem~\ref{thm_stability} states that the temperature 
proceeds along the arrows shown in Figure~\ref{fixed_point_illustration} during 
fixed point iterations.
When $\mathcal{F}(\T_m) < 0$, i.e., no fixed point exists, 
$\T$ decreases at each temperature iteration no matter where it starts. 
Therefore, there is a thermal runaway as illustrated in Figure~\ref{fixed_point_illustration}(a). 
When $\mathcal{F}(\T_m) > 0$, there are two fixed points denoted by $\T_u$ and $\T_s$. 
Any iteration starting in the interval $(0,\T_u)$ diverges to $\T \rightarrow 0$, 
since $\mathcal{F}(\T) < 0$ in that interval.
Conversely, any iteration starting in the interval $(\T_u, \T_s)$ will converge 
to $\T_s$, since $\mathcal{F}(\T) > 0$. 
That is, any iteration starting in the interval $(\T_u,\frac{1}{\alpha})$ 
will also converge to $\T_s$, as illustrated in 
Figure~\ref{fixed_point_illustration}(b). 
Therefore, we conclude that $\T_u\in(0,\T_m)$ is unstable,  
while $\T_s\in(\T_m,\frac{1}{\alpha})$ is stable.

In summary, we derived the conditions for the existence of fixed points and 
their stability regions in terms of the auxiliary temperature $\T$, $\alpha$ and $\beta$. 
Next, we will derive the constraints on the dynamic power consumption required for thermally safe operation.



\subsection{From Temperature to Power Constraint} \label{sec:computation}
Suppose that the thermally safe temperature is given by $T^*$.
We need to work backwards starting with this constraint to find the maximum allowable $P_C^*$ (i.e., the sum of dynamic and gate leakage power consumption).
To this end, we first show the existence of $P_C^*$, 
and then, we provide a \emph{compact formula} to compute $P_C^*$ in terms of 
$T^*$ and the system parameters.

\vspace{2pt}
\noindent \textbf{Existence of $P_C^*$:}
We know that the system converges to a stable fixed point 
when there is no dynamic activity ($P_C \rightarrow 0$).
This means that the necessary and sufficient condition given in 
Theorem~\ref{thm_fixed_point} is satisfied, i.e., $\beta\geq\left(\frac{2}{\T_m}+1\right)e^{-\T_m}$.
Now, consider the other extreme where $P_C$ grows indefinitely due to heavy 
dynamic activity.
Equation~\ref{change_of_parameters} shows that $\alpha$ will also grow, since it increases linearly with $P_C$.
Growing $\alpha$, in turn, implies that $\T_m \rightarrow 0$ according to Equation~\ref{eqn:inflection_point}.
Hence, as the dynamic activity ($P_C$) increases, the term $\left(\frac{2}{\T_m}+1\right)e^{-\T_m}$ \emph{increases monotonically}.
This suggests that there exists a $\T_{m}$ where $\beta=\left(\frac{2}{\T_{m}}+1\right)e^{-\T_{m}}$ holds.
At the same time, this corresponds to the maximum power $P(\T_{m})$ with 
two 
fixed points.
Furthermore, Equation~\ref{eqn:F_Tm} shows that $\mathcal{F}(\T_m)$ decreases monotonically with decreasing $\T_m$,
as illustrated by Figure~\ref{fixed_point_illustration}(a).
Therefore, $\mathcal{F}(\T)$ decreases monotonically as $P_C$ 
increases, and crosses the $x$-axis zero at $P_C^*$.
Due to the monotonic behavior of stable fixed points as a function of $P_C$,
we conclude that there exists a maximum allowable $P_C^*$ whose thermal fixed 
point does not exceed $T^*$.

\vspace{2pt}
\noindent \textbf{$P_C^*$ for a given $T^*$ is computed as follows at runtime:} 

\begin{framed}
\begin{enumerate} [leftmargin=*]
	\vspace{-1mm}
  \item Compute the auxiliary temperature that corresponds to $T^*$
using Equation~\ref{change_of_parameters} \\as  $\T^*= -\kappa_2 / T^*$
  \item Given $\T^*$, find the corresponding $\alpha^*$ using 
  Equation~\ref{eqn:fixed_point}: 
  \vspace{1mm}
  $\alpha^*=\frac{1}{\T^*} \left( 1-\frac{e^{-\T^*}}{\beta \T^*} \right) $ 
  \item Finally, substitute $\alpha^*$ into Equation~\ref{change_of_parameters} 
  to find $P_C^*$: 
  \vspace{1mm}
  $P_C^*=\frac{a-1}{b}\alpha^*\kappa_2 $ 
 \vspace{-1mm}
\end{enumerate}
\end{framed}
%
In summary, we conclude that any power value such that $P_C<P_C^*$ 
has a thermal stable fixed point less than $T^*$.

\subsection{Time to Reach the Stable Fixed Point}\label{sec:time_to_fp}
The existence and specific value of the fixed point can enable 
DTPM algorithms to determine whether the power/temperature trajectory 
moves towards a dangerous operating point. 
In addition to this, the expected time to reach to that point reveals 
how soon the dangerous zone will be reached. 
Therefore, DTPM algorithms can utilize the timing information 
to determine if there is any imminent possibility of a thermal violation. 
Moreover, this estimate can also be used 
to decide how long the current power consumption can be sustained 
without violating the thermal limit. 
To obtain a computationally efficient estimation, we employ the following exponential model:
\begin{equation}\label{eqn:exponential_model}
T[kT_s] = T_{init} + (T_{fix} - T_{init})(1 - e^{-\frac{kT_s}{\tau}})
\end{equation}
where $T_{init}$ is the initial temperature, $T_{fix}$ is the stable fixed 
point and $\tau$ is the time constant. 
We use the discrete time stamp $kT_s$, since the temperature is sampled with a 
period of $T_s$. 
Two temperature readings 
separated by a delay $DT_s$, i.e., $T[kT_s]$ and $T[(k-D)T_s]$,
can be used to estimate the time constant $\tau$ as:
\begin{equation}\label{eqn:tau_model}
\tau = \frac{t[kT_s] - t[(k-D)T_s]}{\ln(\frac{T[(k-D)T_s] - T_{fix}}{T[kT_s] - T_{fix}})}
\end{equation}
In our experiments, $D=10$ and the time to fixed point 
computation, 
like the other computations in the proposed technique, 
are repeated with $T_s =100$~ms.
\subsection{Solution for the MIMO Case} \label{sec:mimo}
The SISO model given in Equation~\ref{eqn:temp_ss} can be extended 
to explicitly show the impact of each power source on the thermal hotspot 
of interest $T_i$, as follows:
\begin{align*} 
T_i = {} & [a_{i1}, a_{i2}, \ldots, a_{iN}]\mathbf{T[k]} + [b_{i1}, b_{i2}, 
\ldots, b_{iM}] 
\mathbf{P[k]} \\
= {} & [a_{i1}, \ldots, a_{iN}] [T_1, \ldots, T_N]^\intercal + [b_{i1}, 
\ldots, b_{iM]} [P_1, \ldots, P_M]^\intercal \nonumber
\end{align*}
where $P_i$ terms are defined in Equation~\ref{eqn:total_power}.
In order to find the fixed point temperatures, we need to solve the set of 
equations given by:
\begin{equation}\label{eqn:temp_miso}
\mathbf{f(T)}=\mathbf{f}(T_1,\ldots,T_N)=\left[\begin{array}{c}
f_1(T_1,\ldots,T_N) \\
\vdots \\
f_N(T_1,\ldots,T_N) 
\end{array}\right]=0
\end{equation} where $
f_i(T_1,\ldots,T_N)=[a_{i1}, \ldots, a_{iN}] [T_1, \ldots, 
T_N]^\intercal$ \\
\hspace*{7em}$+ [b_{i1}, \ldots, b_{iM}] [P_1, \ldots, 
P_M]^\intercal - T_i,~\forall~1\leq i \leq N$.\vspace{1.5mm}\\
This is a nonlinear equation due to the exponential terms in $P_i$. 
Generalizing our solution to obtain similar closed-form formulae for the MIMO 
model requires solving the nonlinear equations in Equation~\ref{eqn:temp_miso}. 
The standard approach to this problem 
is to find a good initial point, and use a root search algorithm for nonlinear equations. 
By following this approach, we use our SISO solution for each hotspot as the ``\textit{initial point}''. 
Then, we employ an iterative search to find the roots of nonlinear equations. 
More 
specifically, we solve Equation 17 via Newton's method~\cite{atkinson2008introduction} where our SISO 
solution is used as the initial point. 
Our experimental measurements show that the closed loop 
solutions 
presented in Sections~\ref{sec:fixed_point_condition}-\ref{sec:time_to_fp} 
are very close to the roots of the MIMO system. 
Hence, iterations converged to the roots of the MIMO system in less than five 
iterations in the worst case. 
Moreover, our simulations show that the region of convergence is quite large. 
This means that even if our initial points are not accurate, we will still converge to the solution. 
\section{Experimental Evaluation} \label{sec:experiments}


\subsection{Experimental Setup} \label{sec:experimental_setup}
The proposed fixed point prediction scheme is evaluated on the Odroid XU3 board
which employs Samsung Exynos 5422 SoC~\cite{ODROID_Platforms}.
Exynos 5422 is a single ISA big.LITTLE SoC consisting of 4 Cortex-A15~(big) 
cores, 4 Cortex-A7~(little) cores and a Mali GPU.
The Odroid XU3 board also comes with thermal sensors to measure the temperature 
of each big core and the GPU.
It also includes current sensors to measure the power consumption of the big 
cluster, little cluster, the memory and the GPU.
The proposed approach is invoked with the default frequency 
governors every 100~ms.
	That is, we periodically sample four power consumption and five temperature 
	values.
	The proposed technique takes 75.2~$\mu$s of this 100~ms period, as detailed 
	in Section~\ref {sec:implementation_overhead}.
We evaluated the proposed approach on commonly used benchmarks,
	including four from the MiBench suite~\cite{guthaus2001mibench},
	three from the PARSEC suite~\cite{Bieniaparsec}, one 
	SPEC~\cite{henning2006spec}, and a matrix multiplication kernel.
	These benchmarks are listed in the first column of 
	Table~\ref{table:summary}.

\subsection{Parameter Characterization} \label{sec:sys_id}
The stability and safety analysis presented in 
Section~\ref{sec:fixed_point_prediction} does not make any assumptions about 
specific parameter values. 
However, we need to characterize the system parameters, such as the leakage current and 
the state-space model parameters, in order to evaluate the accuracy of our 
analysis. 
As the first step, we characterized the leakage current parameters $I_g, \kappa_1, \kappa_2$ for the Odroid XU3 board 
using a methodology similar to those presented 
in~\cite{liu2007accurate,singla2015predictive}.
More precisely, we placed the target device in a furnace and collected data 
at five different temperatures ranging from 40$^\circ$C to 80$^\circ$C, 
while running a light workload to make sure that the temperature is not increased due to the dynamic activity. 
Since the dynamic power consumption remained constant, the measured power consumption difference between two experiments 
gives the difference in leakage power at two known temperatures. 
We used measurements at five different temperatures to find the unknowns 
$I_{g,i}, \kappa_{1,i}, \kappa_{2,i}$ for the big core cluster and the GPU as these are the dominant sources of leakage in our board.

\begin{figure}[t]
	\centering
	\includegraphics[width=0.59\linewidth]{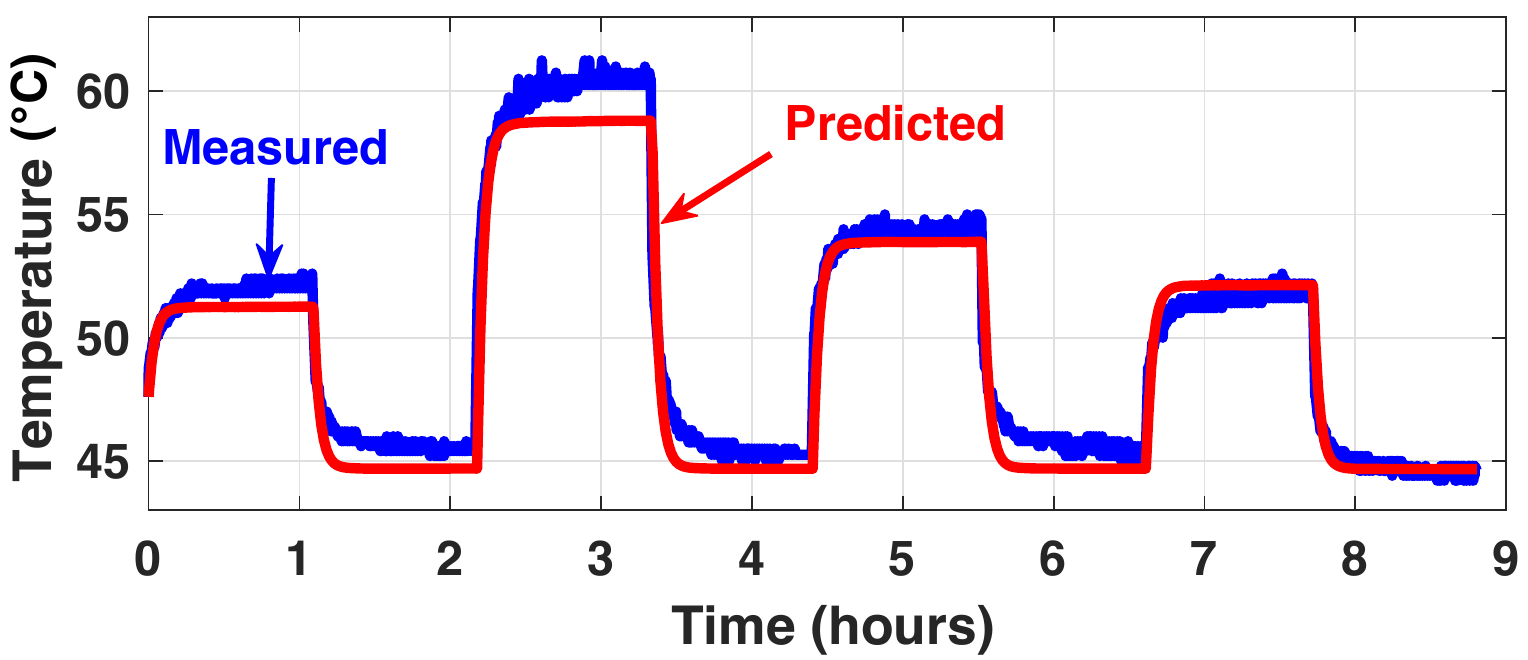}
	\vspace{-0.15in}
	\caption{Comparison of predicted temperature with the measured temperature.}
	\label{fig:iterations}
	\vspace{-0.07in}
\end{figure}

The second step is estimating the $A$ and $B$ matrices used in Equation~\ref{eqn:temp_model}.
In order to characterize the $A$ matrix, we ran the big cluster at a fixed frequency level until the temperature rises to a steady state value.
Once the steady state was reached, we turned off the big core and let the device run until the board cools down. 
We repeated this for multiple frequency levels as shown in Figure~\ref{fig:iterations}.
We also repeated the complete experiment multiple times and then averaged the 
results to account for any variations in the system behavior over time.
The data from this experiment helps us determine how the temperature at the current time step affects the temperature in the next time step.
In a typical run, different workloads lead to varying utilization of system 
resources, 
which lead to different operating frequencies. 
To determine the effect of each power source on the temperature, 
we excited all the frequency levels of the A7 cores, A15 cores and the GPU in a pseudo-random bit sequence~(PRBS),
and recorded the temperature.
This helps us to identify the effect of each power source on the temperature 
under varying conditions, 
i.e., characterizing the $B$ matrix.
This data is then used in conjunction with the data from the previous experiment to jointly identify $A$ and $B$.
The $A$ and $B$ matrices are used to predict the temperature given an initial temperature and power consumption at each time step. 
As we can see in Figure~\ref{fig:iterations}, the computed temperature closely 
follows the measured temperature.
In this work, we performed characterization on one Samsung Exynos 5422 chip. 
The parameters obtained for leakage current and the thermal model are used to 
evaluate the accuracy of the fixed point analysis on three different instances. 
In general, the chip manufacturers divide their products into multiple bins, 
such as low power and high performance parts, 
and sell them using different stock keeping units~(SKUs). 
Therefore, our analysis can be performed for each SKU of a given commercial 
chip.
%

\subsection{Validation of Fixed Point Approximation}\label{sec:fixed_point_eval}
\noindent\textbf{Evaluation at fixed frequency:} To evaluate 
the accuracy of the proposed fixed point analysis, we first performed 
experiments at 
various power levels ranging from 0.38 W to 1.01 W, 
while running a light load on the CPU. 
After running the system until a steady-state fixed point is attained, 
we recorded the average power consumption and final temperature. 
We also estimated the average dynamic power consumption for each experiment 
by subtracting the estimated leakage power from the measured total power. 
The first three columns in Table~\ref{table:summary} provide these values. 
Then, we used the estimated dynamic power consumption and the ambient temperature 
to analyze whether a stable fixed point exists or not. 
After confirming that the analysis result is correct, 
we computed the fixed point temperature using the proposed technique. 
The fourth column of Table~\ref{table:summary}, labeled as ``Comput. Fixed 
Point Temp.'', 
lists the results for each power consumption level. 
As summarized in Table~\ref{table:summary}, the fixed point prediction is within 1$^\circ$C of the empirical result for four power levels. 
The largest observed difference was 1.1$^\circ$C, which implies 2.0\% error 
with respect to the measured fixed point. 
This error is quite acceptable for our system, as the temperature sensors 
operate at an integer precision, which can introduce an error in the 
measurement.

\vspace{2mm}
\noindent\textbf{Evaluation on benchmarks:} 
We also evaluated our analysis technique on commonly used benchmarks 
which represent real-world applications. 
We ran these applications for minutes to capture the temperature dynamics, 
as summarized in the last column of Table~\ref{table:summary}.
During these experiments, we overwrote the safe temperature limit on the board 
such that the temperature could rise beyond 110$^\circ$C.
The results of the fixed point evaluation for the benchmarks are summarized 
in the lower part of Table~\ref{table:summary}. 
We observe that the fixed point prediction error increases slightly compared to the fixed frequency experiments. 
This is expected, since there are larger variations in the power consumption. 
However, the average prediction error is still only 
3.0$^\circ$C, 
and in the worst case, the error is 5.8$^\circ$C. 
Finally, the average prediction error across all experiments is 
2.3$^\circ$C.

\begin{table*}[t!]
	\footnotesize
	\centering
	\caption{Summary of results for fixed point prediction. 
		The foreground application is listed in the first column. 
		Since these experiments were performed under Android OS, 
		there were more than 100 background applications running at all times.}
	\label{table:summary}
	\begin{minipage}{\columnwidth}
		\begin{center}
			\vspace{-0.1in}
			\begin{tabular}{@{}lccccccc@{}}
				\toprule
				Benchmark &
				\begin{tabular}[c]{@{}c@{}}Avg. Total\\ Power (W)\end{tabular} 
				& 
				\begin{tabular}[c]{@{}c@{}}Avg. Dyn.\\ Power (W)\end{tabular} & 
				\begin{tabular}[c]{@{}c@{}}Empirical Fixed \\Point Temp.
					($^\circ$C)\end{tabular} & 
					\begin{tabular}[c]{@{}c@{}}Comput. 
					Fixed\\
					Point Temp. ($^\circ$C)\end{tabular} & 
					\begin{tabular}[c]{@{}c@{}} 
					Abs. Pred.\\ Error ($^\circ$C)\end{tabular} &  \hspace{-1mm}
				\begin{tabular}[c]{@{}c@{}}Perc. Pred.\\ Error (\%) 
				\end{tabular}
				& \hspace{-1mm}\begin{tabular}[c]{@{}c@{}}Runtime\\ 
					(s)\end{tabular} 
				\\ \toprule
				\begin{tabular}[l]{@{}l@{}}Idle @  1.3 GHz\end{tabular} 
				& 
				0.38                                                           
				& 
				0.31                                                            
				 & 
				51.8                                                            
				        
				&
				
				52.2                                                            
				        
				&
				0.4                                                             
				       
				& 0.8 & 3979
				\\ 
				\midrule
				\begin{tabular}[l]{@{}l@{}}Idle @  1.5 GHz\end{tabular} & 
				0.47                                                           
				& 
				0.38                                                            
				 & 
				54.4                                                            
				        
				&
				
				55.5                                                            
				        
				&
				1.1                                                             
				       
				& 2.0 & 4045 
				\\ 
				\midrule
				\begin{tabular}[l]{@{}l@{}}Idle @  1.8 GHz\end{tabular} & 
				0.70                                                           
				& 
				0.59                                                            
				 & 
				60.2                                                            
				        
				&
				
				60.1                                                            
				        
				&
				0.1                                                             
				       
				& 0.2 & 4171 
				\\ 
				\midrule
				\begin{tabular}[l]{@{}l@{}}Idle @  2.0 GHz\end{tabular} & 
				1.01                                                           
				& 
				0.87                                                            
				 & 
				66.0                                                            
				        
				&
				
				66.6                                                            
				        
				&
				0.6                                                             
				       
				& 0.9 & 3413 
				\\ 
				\toprule
				%
				%
				%
				%
				%
				%
				%
				%
				%
				%
				%
				Vortex & 
				1.73                                                   
				& 
				1.55                                                   
				& 
				80.0                                                   
				
				&
				
				81.4                                                   
				
				&
				1.4                                                   
				
				& 1.7 & 1989 
				\\ 
				\midrule
				\begin{tabular}[l]{@{}l@{}}Matrix Mult. \end{tabular} & %
				1.84                                                           
				& 
				1.65                                                            
				 & 
				83.0                                                            
				        
				&
				
				83.8                                                            
				        
				&
				0.8                                                             
				       
				& 1.0 & 521 
				\\ 
				\midrule
				CRC32 & 
				2.04                                                           
				& 
				1.83                                                            
				 & 
				85.0                                                            
				        
				&
				
				88.5                                                            
				        
				&
				3.5                                                             
				       
				& 4.1 & 907
				\\ 
				\midrule
				Patricia & 
				2.20                                                           
				& 
				1.97                                                            
				 & 
				89.0                                                            
				        
				&
				
				91.8                                                            
				        
				&
				2.8                                                             
				       
				& 3.1 & 900
				\\ 
				\midrule
				Blackscholes & 
				2.42                                                   
				& 
				2.17                                                   
				& 
				94.0                                                   
				
				&
				
				96.6                                                   
				
				&
				2.6                                                    
				
				& 2.7 & 785
				\\ 
				\midrule
				Streamcluster & 
				2.48                                                   
				& 
				2.22                                                            
				 & 
				94.0

				&
				
				97.9

				&
				3.9                                                    
				
				& 4.1 & 614
				\\
				\midrule
				Basicmath & 
				2.49                                                           
				& 
				2.24                                                            
				 & 
				93.0                                                            
				        
				&
				
				98.3                                                            
				        
				&
				5.3                                                             
				      
				& 5.7 &  492 
				\\ 
				\midrule
				Fluidanimate &  
				2.50                                                           
				& 
				2.26                                                            
				 & 
				93.0                                                            
				        
				&
				
				98.8                                                            
				        
				&
				5.8                                                             
				       
				& 6.2  & 
				550                                                         \\
				\midrule
				FFT & 
				2.71                                                           
				& 
				2.41                                                            
				 & 
				101.0

				&
				
				102.5

				&
				1.5                                                             
				       
				& 1.5 & 
				729                                                           \\
				\midrule
				Streamcluster \hspace{-0.5mm}+ \hspace{-0.5mm}CRC32  \hspace{-7mm} &
				3.14                                                  
				& 
				2.79                                                  
				& 
				109.0

				&
				
				111.5

				&
				2.5                                                   
				
				& 2.3 & 
				1132                                                   
				
				\\ 
				\bottomrule
			\end{tabular}
		\end{center}
	\end{minipage}
\end{table*}

\subsection{Power-Temperature Trajectory}
We also performed experiments to compare the
measured power-temperature trajectory 
against the fixed point predicted by our analysis. 
Trajectories for different initial power consumption and temperature 
are shown in Figure~\ref{fig:fp_compare}.
The initial points are denoted by black $\circ$ markers and simulated trajectories are plotted using dashed black lines. 
For each of the initial points, the dynamic power starts with a small value and then rises to a steady value depending on the workload.
For example, consider the trajectory that starts from 
the initial point (0.5~W, 50$^\circ$C). 
The dynamic power increases until about 0.87~W, 
and then it remains steady at 0.87~W. 
The arrows show how each trajectory converges to the 
stable fixed point at (1.02~W, 66.6$^\circ$C) 
shown by a larger black $\bigcirc$. 
These trajectories are obtained by solving the set of nonlinear equations given 
in Equation~\ref{eqn:nonlinear_model} iteratively. 
Given the same assumptions, the proposed approach 
finds the stable fixed point as (1.02~W, 66.6$^\circ$C), 
as denoted by the red $\blacktriangle$ marker on the same figure. 
Hence, the prediction exactly matches the simulated trajectory. 
Furthermore, we performed one more set of experiments on the board by imposing a dynamic power consumption 
close to the value used in the simulation. 
The trajectory measured during this experiment is plotted using solid blue line in Figure~\ref{fig:fp_compare}.
This trajectory almost overlaps with the simulation and converges to (1.01~W, 
66.0$^\circ$C). 
As a result, the difference between the empirical and theoretical fixed points is only 
0.6$^\circ$C, which agrees with the results
in Table~\ref{table:summary}.

\begin{figure*}[t!]
	\begin{minipage}{1\textwidth}
		\hspace{0.8in}
		\includegraphics[width=0.7\textwidth]{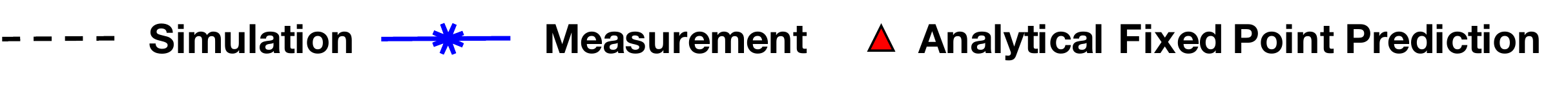}
		\vspace{-2mm}
	\end{minipage}
	\begin{minipage}{0.45\textwidth}
		\centering
		\includegraphics[width=1\linewidth]{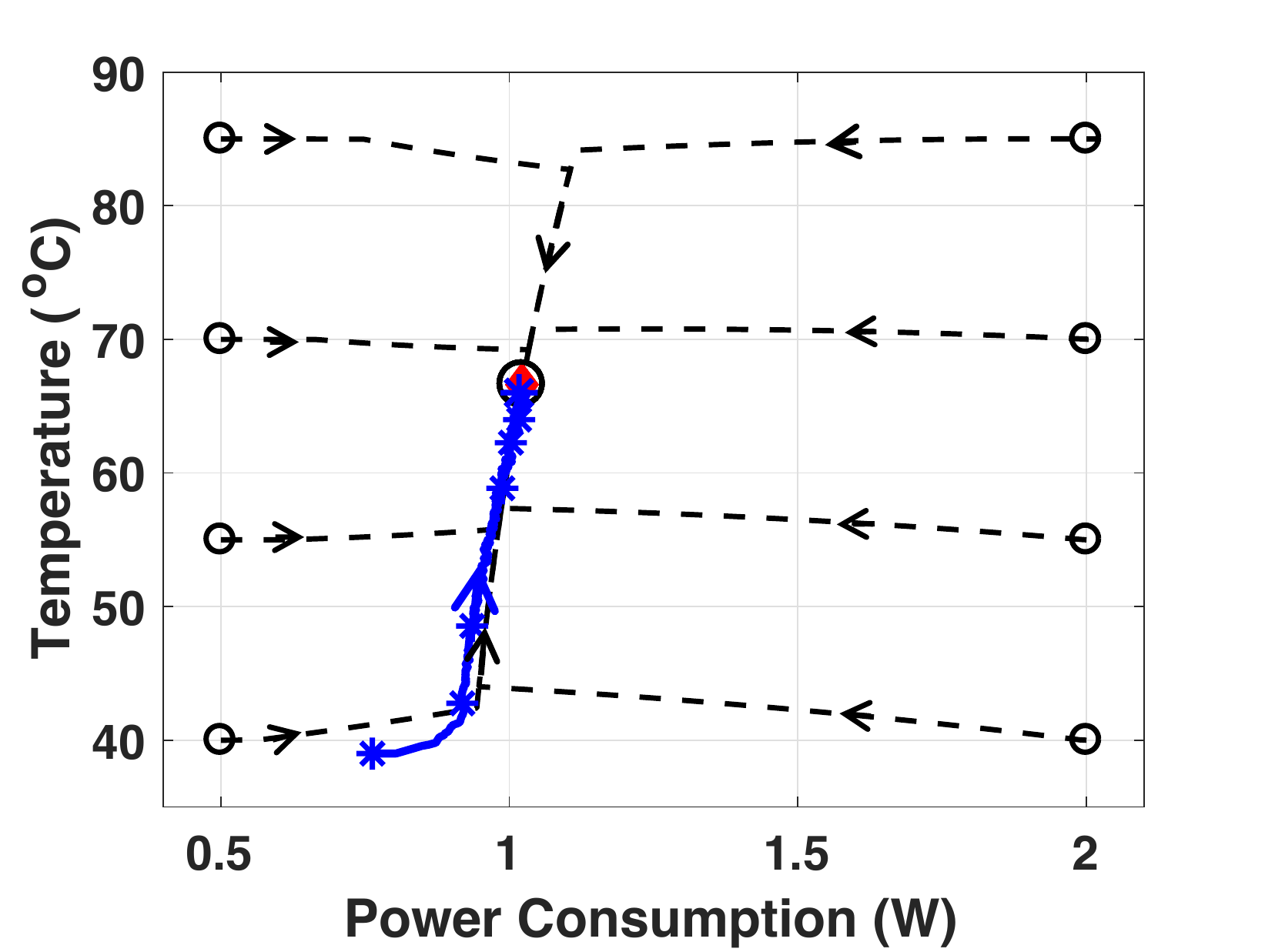}
		\vspace{-3mm}
		\subcaption{Analytical, experimental and simulation 
			results when the fixed point is \textit{lower} than the thermally 
			safe 
			temperature.}
		\label{fig:fp_compare}
	\end{minipage}
	\hfill
	\begin{minipage}{0.45\textwidth}
		\centering
		\includegraphics[width=1\linewidth]{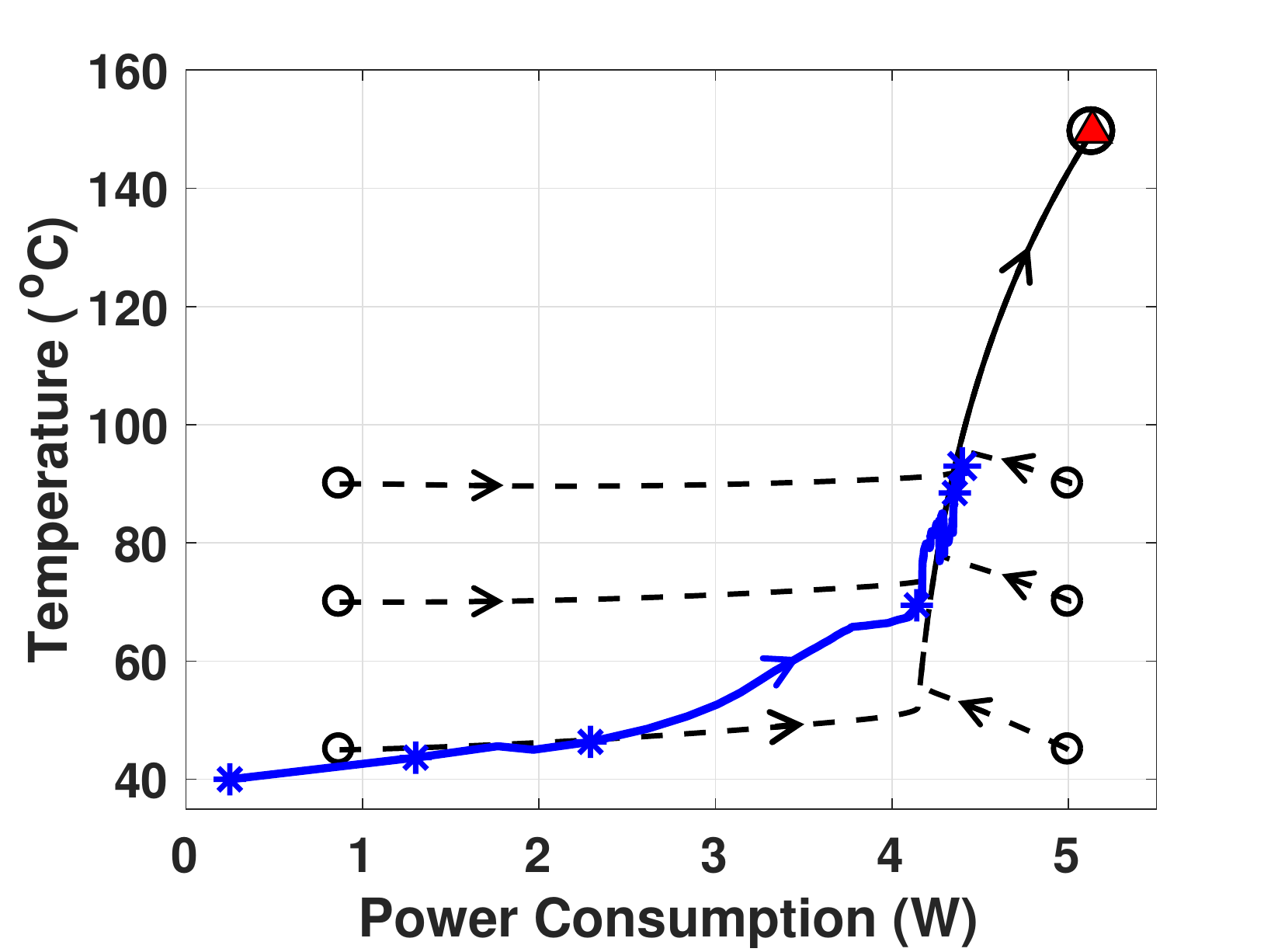}
		\vspace{-3mm}
		\subcaption{Analytical, experimental and simulation 
			results when the fixed point is \textit{higher} than the thermally 
			safe 
			temperature.}
		\label{fig:unstable_simulation}
	\end{minipage}
	\vspace{-0.1in}
	
	\caption{Power-temperature trajectories with multiple initial power and 
		temperature values. Black $\circ$ markers are the initial points and 
		the 
		dotted black line shows the trajectory followed by power and 
		temperature. 
		Black $\bigcirc$ and red $\blacktriangle$ markers show the simulated 
		and 
		\textit{computed} fixed point respectively. 
		We observe that the simulation converges to the computed fixed point 
		for 
		each initial point. The trajectory of the real experiment~(shown using 
		blue 
		lines) follows the simulated trajectory closely.}
	\vspace{-0.1in}
\end{figure*}

In order to illustrate the case where the temperature converges to a value beyond the thermal limit of our board, 
we performed one more set of experiment and simulation, as shown in Figure~\ref{fig:unstable_simulation}. 
We disabled the thermal throttling to let the temperature 
rise beyond 110$^\circ$C.
Similar to Figure~\ref{fig:fp_compare}, the dynamic power starts with a small value until it increases to 4.23~W.
The dashed black lines show the simulated trajectory followed by the power and temperature for each initial point.
We note that the simulation for each initial point converges 
to the fixed point at (5.14~W, 149.5$^\circ$C)
marked with a larger black $\bigcirc$.
With these conditions, the proposed approach finds 
the fixed point as (5.14~W, 149.6$^\circ$C), 
as denoted by the red $\blacktriangle$ marker.
The difference between the analytical solution and 
simulated trajectory is 0.1$^\circ$C, hence, they almost overlap in 
Figure~\ref{fig:unstable_simulation}.
Moreover, we performed one instance of the experiment 
on the board by choosing the initial point as (0.4~W, 40$^\circ$C). 
We also forced the dynamic power 
consumption close to 4.20~W, i.e., the value used in the simulation.
The solid blue line in Figure~\ref{fig:unstable_simulation} shows that 
the actual trajectory closely follows the simulated trajectory 
until the temperature reaches 93$^\circ$C.
At that point, we stopped the experiment to avoid damage to our board.

\noindent\textbf{Power-Temperature with Thermal Throttling:} 
To further evaluate the validity of our fixed point prediction, 
we performed two sets of experiments: one with and 
the other without throttling. 
We used the FFT benchmark from the MiBench suite, 
as it exhibits a representative behavior and 
leads to higher temperature fixed point than the other applications.
The proposed fixed point calculation 
is performed every 100~ms in the Linux kernel.
This allows us to analyze how the fixed point prediction 
evolves over time. 
Red $\blacktriangle$ markers in Figure~\ref{fig:fft_fixed_point_eval} 
show that the fixed point predictions vary from about 
95.0$^\circ$C to 105.0$^\circ$C.
This variation is expected, since the dynamic 
power changes during the execution of the application. 
However, our analysis results still match closely with the 
measured trajectory which approaches the predicted fixed point. 


We repeated the previous experiment, this time by incorporating 
a thermal throttling policy.  
The power consumption starts from a small value of 
0.48~W and then increases to about 2.40~W 
after the benchmark starts execution.
When the power consumption is 2.40~W, 
the fixed point is predicted as 99.0$^\circ$C. 
As the power consumption steadily increases 
due to the increased leakage, 
the fixed point prediction increases to 102.0$^\circ$C, 
as shown by the region A in Figure~\ref{fig:fft_throttle_experiment}. 
The solid blue line shows that the measured trajectory 
advances towards the predicted fixed point, 
as in the previous experiment. 
This time, the DTPM policy is triggered to throttle the frequency 
as soon as the temperature reaches 85.0$^\circ$C. 
Throttling starts reducing the power consumption, 
which in turn slows down the temperature increase. 
Figure~\ref{fig:fft_throttle_experiment} shows that 
the reduction in the power consumption is reflected in our 
fixed point prediction. 
More precisely, the proposed technique updates the fixed point prediction 
as~(2.07~W, 87.7$^\circ$C).
At the same time, the measured power-temperature 
trajectory changes its course. 
We observe that it starts converging to (2.06~W, 
86.0$^\circ$C), 
which matches very well with our prediction. 
This experiment illustrates that our analysis can adapt to changes in 
the dynamic power consumption and predict the fixed point accurately.

\begin{figure*}[t!]
	\begin{minipage}{1\textwidth}
	\hspace{0.8in}
	\includegraphics[width=0.7\textwidth]{figures/legend.pdf}
	\vspace{-2mm}
\end{minipage}
	\begin{minipage}[b]{0.45\textwidth}
		\centering
		\includegraphics[width=\linewidth]{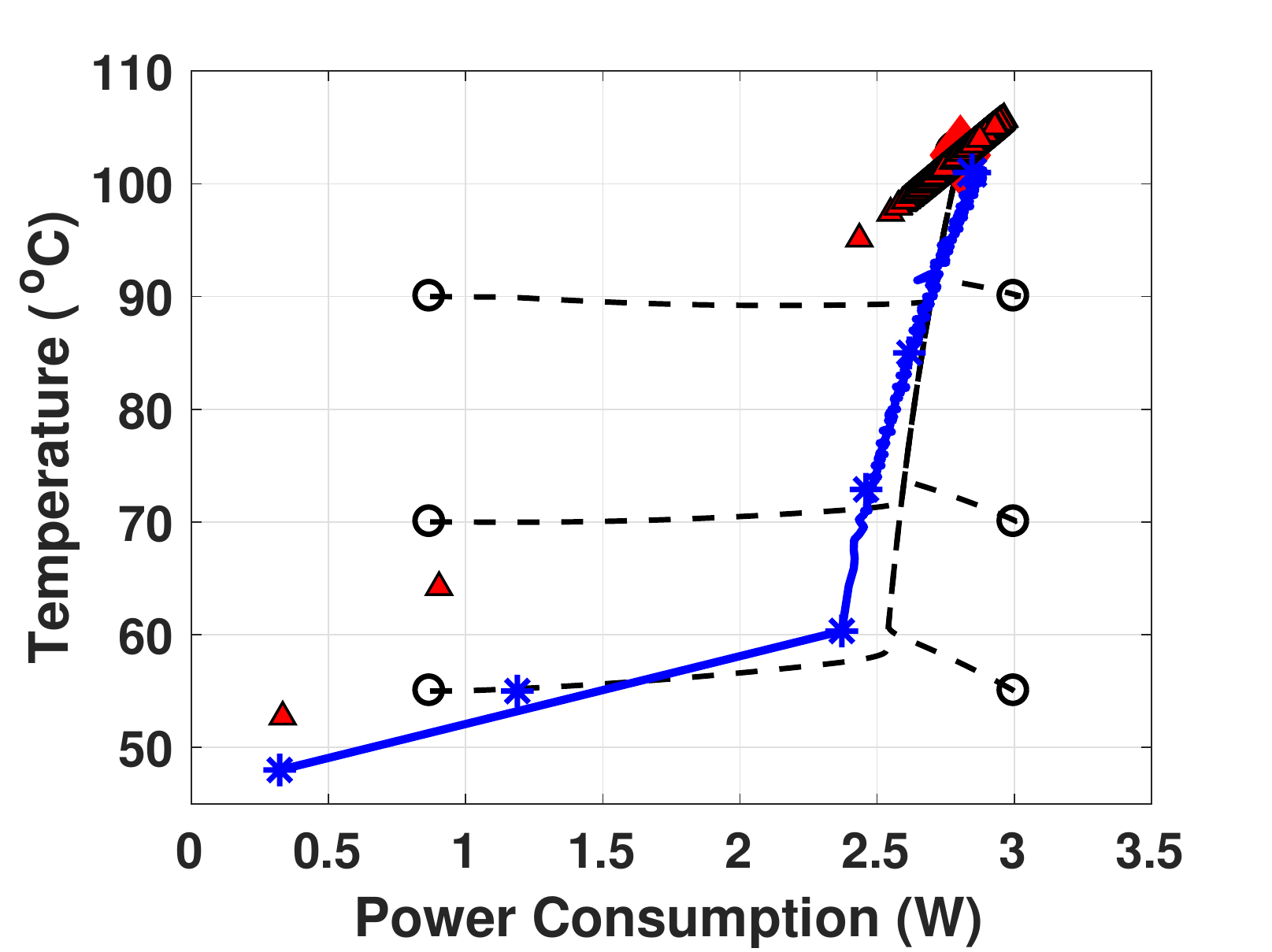}
		\caption{Evaluation of the power-temperature 
			trajectory 
			and fixed point evaluated every 100 ms
			for the FFT benchmark.}
		\label{fig:fft_fixed_point_eval}
	\end{minipage}
	\hfill
	\begin{minipage}[b]{0.45\textwidth}
		\centering
		\includegraphics[width=\linewidth]{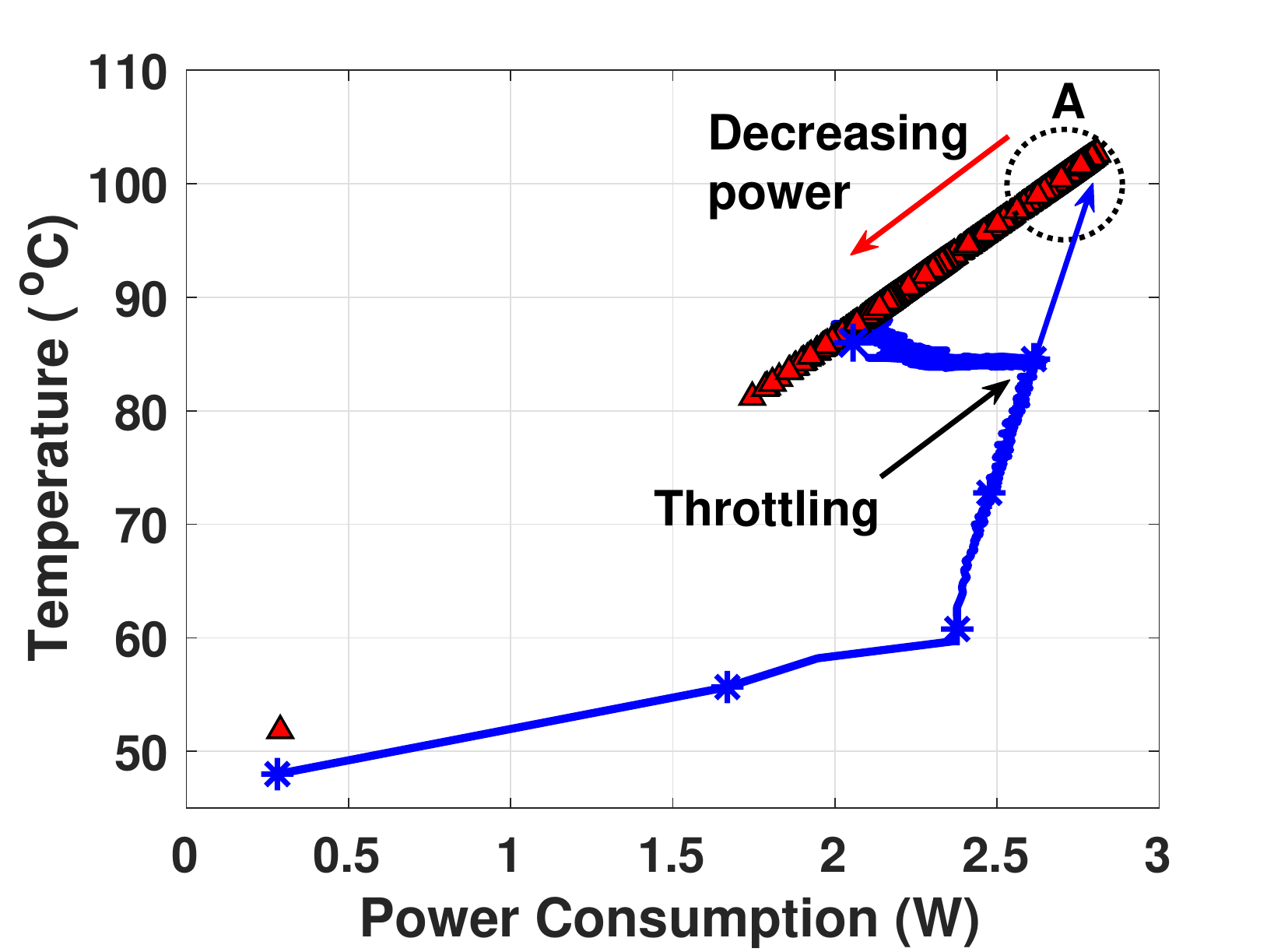}
		\caption{Using the proposed analysis for thermal 
			throttling, 
			and its impact on power-temperature trajectory.}
		\label{fig:fft_throttle_experiment}
	\end{minipage}
	\vspace{-5mm}
	
\end{figure*}
\subsection{Power Constraint from Temperature}
We also evaluated the change in the maximum power constraint 
$P_C^*$ when the 
temperature constraint $T^*$ is varied.
In particular, we swept the value of $T^*$ from $50^\circ$C to $105^\circ$C 
and calculated the value of $P_C^*$. 
The black line in Figure~\ref{fig:tstar_vs_pc} shows the maximum 
the power constraint $P_C^*$ found using the analytical approach 
outlined in Section~\ref{sec:computation}. 
To validate the analysis results, we set the temperature constraint $T^*$ 
as the theoretical fixed point. 
Then, we empirically found the power constraint 
$P_C^*$ on the target board. 
The red $\vartriangle$ markers in Figure~\ref{fig:tstar_vs_pc} 
show that the measured results are indeed on the 
trend found by the proposed technique. 
This result can be easily used as a guideline to decide 
the maximum power level that a chip can be operated based on a given 
temperature constraint.

\begin{figure*}[t!]
	\begin{minipage}[t]{0.45\textwidth}
		\centering
		\includegraphics[width=0.9\linewidth]{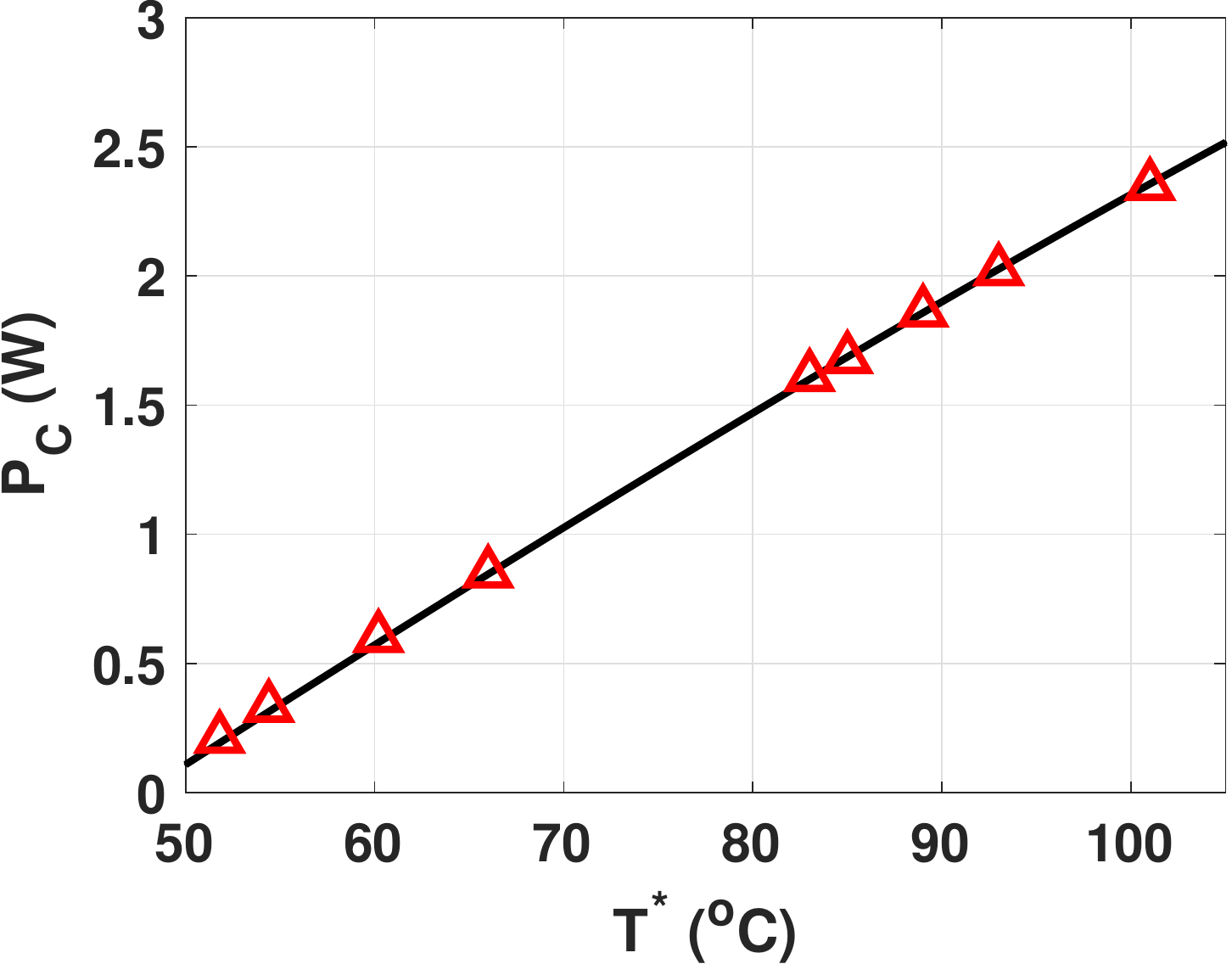}
		\caption{Variation of the maximum power constraint $P_C^*$ for 
			different 
			values of $T^*$.
		}
		\label{fig:tstar_vs_pc}
	\end{minipage}
	\hfill
	\begin{minipage}[t]{0.45\textwidth}
		\centering
		\includegraphics[width=0.95\linewidth]{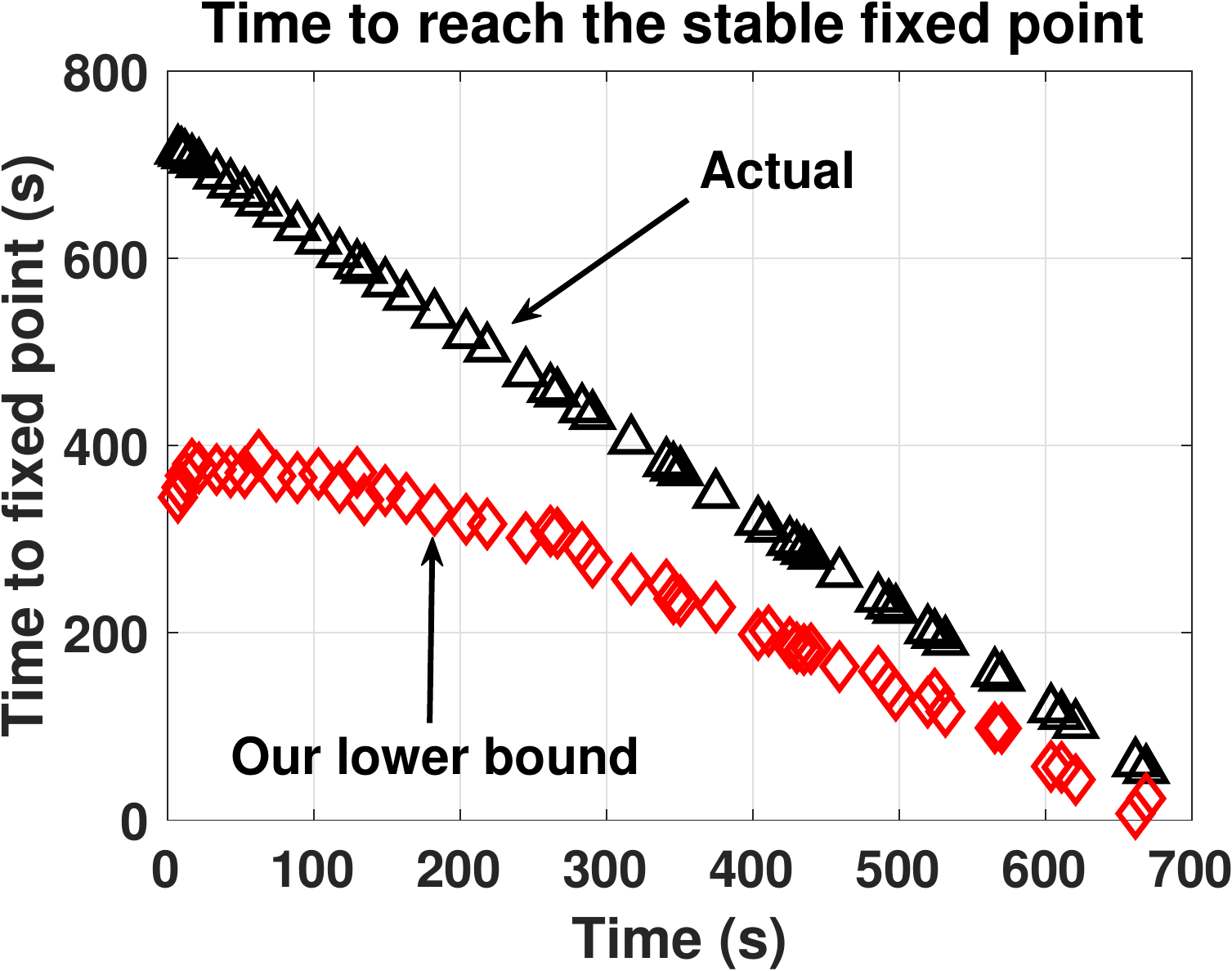}
		\caption{Comparison of the actual time taken to reach fixed point and 
			the 
			predicted time to reach fixed point.}
		\label{fig:time_to_fixed_point}
	\end{minipage}
		\vspace{-0.15in}
\end{figure*}

\subsection{Evaluation of the Time to Reach Fixed Point}

We used Equation~\ref{eqn:tau_model} to estimate the 
time at which the temperature will reach the fixed point for the FFT benchmark.
Then, we compared this estimate to the actual time 
to reach the fixed point. 
Figure~\ref{fig:time_to_fixed_point} shows that our estimate provides a lower bound 
for the time to reach the fixed point.
A lower bound is useful, since it can be used safely to avoid thermal violations.
We also see that the estimation improves in accuracy 
as the benchmark continues to run.
In summary, this estimation 
can be used by DTPM policies to decide how long the current 
power consumption level can be 
sustained without violating the thermal limit.



\subsection{Implementation Overhead}\label{sec:implementation_overhead}
Our theoretical analysis and proofs enable us to derive 
	analytical solutions that can be implemented with negligible overhead. 
	To quantify this overhead, we implemented the proposed solutions
	on Android 4.4.4 / Linux 3.10.9 kernel user space, 
	and measured the overhead on the Odroid XU3 mobile platform.
Our implementations are invoked periodically with the default 
frequency governors, i.e., every 100 ms. 
We observed that reading the sensors takes 
	13.8~$\mu$s, while computing the fixed point estimate for the SISO case
			takes 6.8~$\mu$s.
	We can achieve this low overhead since both $1/\alpha$ and
	$\T_m$ have closed form solutions. Once we have the fixed point estimate, 
	the Newton's method to solve the MIMO case takes about 50~$\mu$s.
Similarly, it takes 1.1 $\mu$s to compute the maximum allowable power 
consumption $P_C^*$ 
given a temperature constraint. 
This small overhead is enabled by the three closed form 
equations framed 
	at the end of Section~\ref{sec:implementation_overhead}.
Finally, computing the time to reach the stable fixed point 
takes 3.5~$\mu$s. The combined overhead of all three computations is about 
75.2~$\mu$s out of 100 ms, i.e., 0.075\%. When the 
implementation is moved to the kernel, the execution time of the functions 
reduces by about 30\%.
In contrast, an iterative approach cannot determine the existence and 
stability of fixed points.
Furthermore, it can predict the temperature given the power consumption, but it 
cannot compute the maximum allowable power consumption $P_C^*$.
Finally, temperature prediction over an interval of 1000~s alone takes about 
550 $\mu$s.
The iterative approach also has an average error of more than 10$^\circ$C, 
which is higher than that of our approach.

\section{Conclusion} \label{sec:conclusion}
This paper presents a theoretical analysis of the stability 
of the power consumption and temperature dynamics. 
First, we show that the power-temperature dynamics have either no fixed point 
or two fixed points, as a function of the system parameters and the dynamic 
power consumption.
When there are two fixed points, we prove that 
one of the fixed points is stable, while the second one is unstable.
We also determine the region of convergence, which is important for safe 
thermal 
operation.
Third, we derive an analytical formula to compute the maximum dynamic power 
consumption that guarantees a thermally safe operation.
Experiments and simulation results show that our analysis can be used 
to predict the fixed point within 0.1$^\circ$C to 5.8$^\circ$C 
accuracy with only 0.075 ms computational overhead.
Hence, the proposed approach can be used to take 
\emph{proactive} DTPM decisions, 
and detect security threats which force the system to operate beyond the 
thermal limit.

%
%
\bibliographystyle{ACM-Reference-Format}
\bibliography{references/embedded_refs}
%
\appendix
\section{Appendix}

\subsection{Proof for Lemma~\ref{lemma_F_T}} \label{proof_F_T}
Note that $\mathcal{F}(\T)$ approaches $-\infty$ at both end points.
Now, the first and second derivatives of $\mathcal{F}(\T)$ with respect to $\T$ 
are evaluated as:
\begin{equation*}\label{eqn:first_der_F}
	\vspace{-1mm}
\mathcal{F}'(\T) =\frac{1}{\T} - \frac{\alpha}{1-\alpha\T} + 1 , \hspace{3mm} 
\mathcal{F}''(\T) =-\frac{1}{\T^2} - \frac{\alpha^2}{(1-\alpha\T)^2} 
\vspace{-0.25mm}
\end{equation*}
%
Since $\mathcal{F}''(\T)<0$ for $\T>0$, this function is concave.
By setting $\mathcal{F}'(\T)=0$, we can show that the maxima of 
$\mathcal{F}(\T)$ occurs when $\T_m = 
\frac{1}{2\alpha}-1\pm\sqrt{\frac{1}{4\alpha^2}+1}$.
Since the temperature is positive, the following relations hold at the maximum 
point:
\begin{equation} \label{appendix_eqn:inflection_point}
\vspace{-1mm}
\T_m = \frac{1}{2\alpha}-1+\sqrt{\frac{1}{4\alpha^2}+1},~i.e., \alpha = \frac{\T_m+1}{\T_m{^2}+2 \T_m}
\end{equation}
%
Moreover, due to the concavity of $\mathcal{F}(\T)$ is an increasing function 
on $(0,\T_m)$ and decreasing function on $(\T_m,\frac{1}{\alpha})$,
as depicted in Figure~\ref{fixed_point_illustration}.


\subsection{Proof for Theorem~\ref{thm_fixed_point}} \label{existence_proof}

\begin{proof}
Function $\mathcal{F}$ reaches its peak at $\T=\T_m$,
and its solution contains two points if and only if $\mathcal{F}(\T_m)\geq0$ 
(Figure~\ref{fixed_point_illustration}(b)).
Otherwise, if $\mathcal{F}(\T_m)<0$, it does not intersect the x-axis and there 
is no solution (Figure~\ref{fixed_point_illustration}(a)).
The condition $\mathcal{F}(\T_m)\geq0$ is equivalent to:
\begin{align} \label{eq:theorem1}
\mathcal{F}(\T_m) = {} & \ln \beta + \T_m + \ln(\T_m(1-\alpha \T_m)) \nonumber 
\\
 = {} & \ln \beta + \T_m + \ln(\T_m(1-\frac{\T_m+1}{\T_m^2+2 \T_m} \T_m)) 
 \nonumber 
\\
 = {} & \ln \beta + \T_m - \ln\left(\frac{2}{\T_m}+1 \right)\geq0.
\end{align} Hence, $\beta\geq\left(\frac{2}{\T_m}+1\right)e^{-\T_m}$ follows.
\end{proof}

\subsection{Proof for Lemma~\ref{lem_signFT}} \label{proof_lem_sign_FT}
\begin{proof}
The temperature iteration equation is:
\[
T[k+1]=aT[k]+b(P_C+V_{\kappa_1}T[k]^2e^{\frac{\kappa_2}{T[k]}}.
\]
We can rewrite this equation by a change of variable, i.e., 
$T[k]=\frac{-\kappa_2}{\T[k]}$, as:
\[
\frac{-\kappa_2}{\T[k+1]}=-\frac{-a\kappa_2}{\T[k]}+bP_C+bV\kappa_1\frac{\kappa_2^2}{\T[k]^2}e^{-\T[k]}.
\] After substituting the definitions of $\alpha$ and $\beta$ in 
\ref{change_of_parameters} and rearrangement of terms, we obtain:
\begin{align*}
\frac{1}{\T[k+1]} = {} & 
\frac{a}{\T[k]}-\frac{bP_C}{\kappa_2}-\frac{bV\kappa_1\kappa_2}{\T[k]^2}e^{-\T[k]},
 \\
= {} & \frac{a}{\T[k]}-(a-1)\alpha-\frac{(a-1)}{\beta\T[k]^2}e^{-\T[k]}, \\
= {} & 
\frac{1}{\T[k]}+\frac{a-1}{\T[k]}-(a-1)\alpha-\frac{(a-1)}{\beta\T[k]^2}e^{-\T[k]},
 \\
= {} & \frac{1}{\T[k]}-\frac{(1-a)}{\beta 
\T[k]^2}\left(\beta(1-\alpha\T[k])\T[k]-e^{-\T[k]}\right).
\end{align*}
Note that when $\mathcal{F}(\T)<0$,  we can show that $\beta (1-\alpha 
\T)\T-e^{-\T}<0$ using Equation~\ref{eqn:fixed_point_ln}. By inspecting the 
last equation, this means that each temperature iteration decreases 
$\T[k]$ when $\mathcal{F}(\T)$ is negative. In contrast, $\mathcal{F}(\T)>0$ 
implies that  $\beta (1-\alpha \T)\T-e^{-\T}>0$. That is, each fixed point 
iteration increases $\T[k]$ when $\mathcal{F}(\T)$ is positive.
\end{proof}

\subsection{Proof for Theorem~\ref{thm_stability} } \label{stability_proof}
\begin{proof}
When Equation \ref{eqn:fixed_point} has no solution, Equation \ref{eqn:fixed_point_ln} has no solution and the sign of the function $\mathcal{F}(\T)$ is negative since its terms are concave and affine functions. By Lemma \ref{lem_signFT}, at each fixed point iteration, the value of $\T[k]$ decreases towards to $0$. Due to the fact that $T[k] = -\kappa_2 / \T[k] \rightarrow \infty$, thermal runaway occurs as illustrated with the arrows in Figure~\ref{fixed_point_illustration}.

When Equation \ref{eqn:fixed_point} (equivalently Equation 
\ref{eqn:fixed_point_ln}) has a solution, there are two fixed points of the 
function $\mathcal{F}(\T)$, i.e., $\T_u$ and $\T_s$ such that 
$0<\T_u<\T_s<\frac{1}{\alpha}$. Since $\mathcal{F}(\T)$ is a concave down 
function, the sign of this function at $\T\in(0,\T_u)$ is negative and by 
Lemma~\ref{lem_signFT}, the value of $\T$ decreases to $0$ on $\T\in(0,\T_u)$ 
just like the no-solution case, and results in thermal runaway. On the other 
hand, the sign of the function $\mathcal{F}(\T)$ is positive on 
$\T\in(\T_u,\T_s)$ and negative on $\T\in(\T_s,\frac{1}{\alpha})$. By 
Lemma~\ref{lem_signFT}, the value of $\T$ increases in $\T\in(\T_u,\T_s)$ and 
decreases in $\T\in(\T_s,\frac{1}{\alpha})$ towards to $\T_s$ on both 
intervals. Hence, we conclude that $\T_s$ is a stable and $\T_u$ is unstable 
fixed point.
\end{proof}

\end{document}